\newtheorem{remark}{Remark}
\newtheorem{theorem}{Theorem}
\newtheorem{lemma}{Lemma}
\newtheorem{corollary}{Corollary}
\newtheorem{proposition}{Proposition}
\begin{document}

\title{Physical Layer Security of Intelligent Reflective Surface Aided NOMA Networks}


\author{
	
	Zhiqing~Tang,
	Tianwei~Hou,~\IEEEmembership{Student Member,~IEEE,}
	Yuanwei~Liu,~\IEEEmembership{Senior Member,~IEEE,}
	Jiankang~Zhang,~\IEEEmembership{Senior Member,~IEEE,}
	and Lajos~Hanzo,~\IEEEmembership{Fellow,~IEEE}
	\thanks{Part of this work has been submitted to IEEE International Conference on Communications (ICC) 2021 \cite{tangICC2021}.}
	\thanks{This work was supported in part by the National Science Foundation of China under Grant 61571401, in part by the National Science Foundation for Young Scientists of China under Grant 61901416.}
	\thanks{Z. Tang is with Zhengzhou University, Zhengzhou,
		China (email:iezqtang@zzu.edu.cn).}
	\thanks{T. Hou is with Beijing Jiaotong University, Beijing,
		China (email:16111019@bjtu.edu.cn).}
	\thanks{Y. Liu is with Queen Mary University of London, London,
		UK (email:yuanwei.liu@qmul.ac.uk).}
	\thanks{J. Zhang and L. Hanzo are with University of Southampton, Southampton, U.K. (email:jz09v@ecs.soton.ac.uk; lh@ecs.soton.ac.uk).}
}

\maketitle
\begin{abstract}
Intelligent reflective surface (IRS) technology is emerging as a promising performance enhancement technique for next-generation wireless networks. 
Hence, we investigate the physical layer security of the downlink in IRS-aided non-orthogonal multiple access networks in the presence of an eavesdropper, where an IRS is deployed for enhancing the quality by assisting the cell-edge user to communicate with the base station.
To characterize the network's performance, the expected value of the new channel statistics is derived for the reflected links in the case of Nakagami-$m$ fading.
Furthermore, the performance of the proposed network is evaluated both in terms of the secrecy outage probability~(SOP) and the average secrecy capacity (ASC).
The closed-form expressions of the SOP and the ASC are derived. We also study the impact of various network parameters on the overall performance of the network considered. To obtain further insights, the secrecy diversity orders and the high signal-to-noise ratio slopes are obtained.
We finally show that: 1) the expectation of the channel gain in the reflected links is determined both by the number of IRSs and by the Nakagami-$m$ fading parameters; 2)~The SOP of both receiver 1 and receiver 2 becomes unity, when the number of IRSs is sufficiently high; 3) The secrecy diversity orders are affected both by the number of IRSs and by the Nakagami-$m$ fading parameters, whereas the high-SNR slopes are not affected by these parameters. 
Our Monte-Carlo simulations perfectly demonstrate the analytical results.
\end{abstract}

\begin{IEEEkeywords}
	{I}ntelligent reflective surface, non-orthogonal multiple access, physical layer security.
\end{IEEEkeywords}
\section{Introduction}
In recent years, intelligent reflective surfaces (IRSs\footnote{Also known as Reconfigurable Intelligent Surfaces (RISs).}) have been proposed for beneficially influencing wireless signal propagation\cite{wu2019intelligent,ozdogan2019intelligent,wu2019towards}. 
They can be installed on walls, ceilings, building facades and other infrastructure elements. Secondly, because they may be readily integrated into the existing wireless networks, they constitute a promising cost-effective spectral efficiency~(SE) and/or energy efficiency (EE) improvement technique capable of controlling the scattering, reflection and refraction characteristics of the radio waves\cite{alghamdi2020intelligent,huang2019reconfigurable}. 
However, the design and optimization of IRS-aided networks requires further study. 
By appropriately adjusting the amplitude-reflection and phase coefficients, the signals reflected by the IRS can be superimposed on the direct link for enhancing the received signals \cite{di2019smart,di2019reflection}.
As another benefit, IRSs also have the ability to eliminate undesired signals, such as co-channel interference~\cite{hou2020mimo}.
It was also revealed that IRSs are capable of reducing the outage probability (OP) of optical communication networks~\cite{wang2020performance}.
In \cite{shi2019enhanced}, the authors jointly optimized the active beamforming at the access point~(AP) and the passive beamforming at the IRS of a specific scenario for maximizing the power harvested by an energy harvesting receiver.
In~\cite{wu2019weighted}, the authors studied the weighted sum-power maximization problem of an IRS-aided wireless information and power transfer (SWIPT) network, while the authors of~\cite{gopi2020intelligent} conceived beam index modulation for millimeter wave communication relying on IRSs.

As a parallel development, power domain non-orthogonal multiple access (NOMA\footnote{Throughout this paper, we focus our attention on the family of power-domain NOMA schemes. Hence we simply use ``NOMA" to refer to ``power-domain NOMA" in the following.}) has the capability of providing services for multiple users within the same physical time and/or frequency resource block at the same time, thereby significantly improving the SE and connection density~\cite{liu2017non,dai2015non,ding2017application}. 
In~\cite{li2020outage}, the authors studied the OP of the single cell multi-carrier NOMA downlink, where the transmitter side only has statistical channel state information (CSI) knowledge. 
In~\cite{yuanwei2017TWC}, the authors pointed out that the secrecy diversity order and the asymptotic secrecy outage probability~(SOP) of a pair of NOMA users are determined by that specific user, which has poorer channel gains.
As demonstrated in~\cite{ding2015impact}, to unleash the full potential of NOMA it is important to ensure that an appropriate power difference exists between the users.
The IRS has the ability to change the channel gains for the sake of enhancing the performance of NOMA, hence their intrinsic amalgam has substantial benefits.

Motivated by the potential joint benefits of IRS and NOMA, IRS-aided NOMA networks have been proposed in~\cite{hou2020reconfigurable} for enhancing both the SE and EE.
Furthermore, both the downlink~(DL) and uplink of IRS-aided NOMA networks were studied~\cite{yanyu2020}.
Additionally, IRS-aided NOMA transmission was contrasted to spatial division multiple access (SDMA) in~\cite{ding2020simple}. 
The performance of IRS-aided NOMA networks relying on both idealized perfect and imperfect successive interference cancellation was investigated in~\cite{yue2020performance}.
In~\cite{mu2019exploiting}, the authors considered both ideal and realistic non-ideal IRS assumptions by jointly optimizing the active beamforming at the BS and the passive beamforming at the IRS of a link to maximize the sum rate.
Furthermore, in \cite{fu2019reconfigurable}, the authors jointly optimized the active beamforming at the BS and the passive beamforming at the IRS of a link for minimizing the total transmit power.
In~\cite{zhang2020downlink}, the authors derived the closed-form expression of the coverage probability of appropriately paired NOMA users and proved that the performance of the IRS-aided NOMA network is better than that of the traditional NOMA network.
A new type of multi-cell IRS-aided NOMA resource allocation framework was proposed in~\cite{ni2020resource}, while an IRS-aided millimeter wave (mmWave) NOMA network was considered in~\cite{zuo2020intelligent}.
In \cite{liu2020ris}, machine learning techniques were adopted in an IRS-aided NOMA network for maximizing the EE. 

Given the broadcast nature of wireless transmission, the issue of physical layer security (PLS) has attracted widespread interests~\cite{wang2016physical,wu2018survey}. Although rigorous efforts have been dedicated to the PLS of wireless communications, the overall progress has remained relatively slow~\cite{almohamad2020smart}. 
However, the emergence of IRS technology provides a new horizon for PLS problems.
In~\cite{yang2020secrecy}, the authors studied the secrecy performance of an IRS-aided wireless communication network in the presence of an eavesdropper (Eve). Similarly, in~\cite{cui2019secure}, the authors investigated an IRS-aided secure wireless communication network, where the eavesdropping channels are stronger than the legitimate communication channels.
In~\cite{yang2020secrecySOP}, the authors studied the SOP of IRS-aided NOMA networks in a multi-user scenario. 
In~\cite{8972400}, the authors investigated whether the use of artificial noise is helpful in terms of enhancing the secrecy rate of IRS-aided networks.
Additionally, a mobile wiretap network was also investigated~\cite{long2020reflections}, in which an unmanned aerial vehicle (UAV) equipped with IRS was considered.
The PLS of an IRS-aided vehicular network was studied in~\cite{makarfi2020reconfigurable}.

\begin{table}[htbp]
	\caption{ Conclusion of the current literature, and the objective of the proposed model.}
	\begin{center}
		\centering
		\begin{tabular}{|l|c|c|c|c|c|c|c|}
			\hline
			\centering
			& \cite{huang2019reconfigurable} & ~\cite{mu2019exploiting}   &  \cite{yang2020secrecy} & \cite{yang2020secrecySOP} & \cite{makarfi2020reconfigurable} &  \cite{dong2020secure}& Proposed scheme \\
			\hline
			\centering
			IRS-aided network             & \checkmark & \checkmark  & \checkmark & \checkmark & \checkmark & \checkmark & \checkmark \\
			\hline
			\centering
			OMA scheme                    & \checkmark & \checkmark & \checkmark &            &  \checkmark &            & \checkmark \\
			\hline
			\centering
			NOMA scheme                   &           &  \checkmark &           & \checkmark &              & \checkmark & \checkmark \\
			\hline
			\centering
			Approximated distribution     &            &            & \checkmark & \checkmark &             &           & \checkmark \\
			\hline
			\centering
			Exact distribution            &            &            &            &            &             &            & \checkmark \\
			\hline
			\centering
			SOP analysis                  &            &            & \checkmark & \checkmark & \checkmark   &           & \checkmark \\
			\hline
			\centering
			Secrecy diversity order       &            &           &            &            &             &            & \checkmark \\
			\hline
			\centering
			ASC analysis                  &            &            &            &           &  \checkmark &            & \checkmark \\
			\hline
			\centering
			High-SNR slpoe                &            &            &            &            &             &           & \checkmark \\
			\hline
		\end{tabular}
	\end{center}
	\label{table:RIS_compare_benchmarks}
\end{table}
\subsection{Motivation and Contribution} 
Most of the existing contributions on PLS in IRS-aided networks studied the problem of maximizing the secrecy rate. An IRS-aided  multiple-input single-output and multiple-input multiple-output network has been studied in \cite{dong2020secure}. 
Furthermore, in \cite{8847342,8743496}, IRS-aided multiple-input single-output networks have been studied. 
As mentioned above, PLS has been studied in diverse scenarios, but
there is a paucity of investigations on IRS-aided NOMA, which motivates this contribution. 
Table~\ref{table:RIS_compare_benchmarks} boldly and explicitly
contrasts our schemes to the pertinent literature. 
Specifically, we consider the scenario of an IRS-aided NOMA network, where a BS supports a cell-center user as well as a cell-edge user, and NOMA is invoked. Additionally, an Eve is close to the cell-edge users. Explicitly, we consider a NOMA network in which the cell-edge user cannot directly communicate with the BS, hence relies on the IRS to communicate with the BS. Our new contributions can be summarized in more detail as follows:

\begin{itemize}
\item  We investigate the secrecy performance of IRS-aided NOMA networks, where BS communicates with a pair of NOMA users in the presence of an Eve.
In particular, we investigate the SOP and the average secrecy capacity (ASC).
\item We adopt the Nakagami-$m$ fading model for the reflected links so that it can be either Line-of-sight (LoS) or non-line-of-sight (NLoS). Correspondingly, we derive new channel statistics for the reflected links based on the associated Laplace transforms and moments. We demonstrate that the expectation of the channel gain for the reflected links is determined by the number of IRSs and the Nakagami-$m$ fading parameters.
\item We derive closed-form expressions of the SOP and the ASC for the proposed network. 
To glean further insights, we derive asymptotic approximations of the SOP and the ASC in the high signal-to-noise ratio (SNR) regime to derive both the secrecy diversity order and the high-SNR slope. We demonstrate that the number of IRSs and Nakagami-$m$ fading parameters directly affect the secrecy diversity order, but have no effect on the high-SNR slope.
\end{itemize}

\subsection{Organization}
The rest of the paper is organized as follows. In Section \ref{SYSTEM MODEL}, the model of the IRS-aided NOMA network is discussed. In Section \ref{Performance and analysis}, the new channel statistics of the reflected links are derived, and then the performance analysis of IRS-aided NOMA networks is conducted. Furthermore, our numerical and simulation results are composed in Section \ref{NR}. Finally, our conclusions are offered in Section \ref{CONCLUSIONS}.

In this paper, scalars are denoted by italic letters. For a scalar $s$, $s!$ denotes the factorial of $s$. Vectors and matrices are denoted by boldface letters. For a vector $\mathbf{v}$, $\mathbf{v}^T$ denotes the transpose of $\mathbf{v}$, and diag($\mathbf{v}$) denotes a diagonal matrix in which each diagonal element is the corresponding element in $\mathbf{v}$, respectively. $\mathbb{P}(\cdot)$ denotes the probability, and $\mathbb{E}(\cdot)$ represents the expectation.

\section{System Model}\label{SYSTEM MODEL}

As shown in Fig. 1, we consider the secure DL of an IRS-aided NOMA network, where a BS communicates with two legitimate users in the presence of an Eve. For the pair of legitimate users, the NOMA transmission protocol is invoked. It is assumed that the BS, the paired NOMA users and Eve are equipped with a single antenna.
We also assumed that Eve has powerful detection capability, which is capable of detecting the messages of the paired NOMA users.
We have an IRS at the appropriate location, who has $N$ intelligent surfaces. 
More specifically, user 2 is the cell-edge user, who needs help from the IRS to communicate with the BS. At the same time, user 1 is the cell-center user who can directly communicate with the BS.
Furthermore, there is no direct link between the IRS and user 1, as well as between the BS supporting user 2 and Eve, due to their long-distance and the presence of blocking objects.
\begin{figure} [t!]
	\centering
	\includegraphics[width= 4in]{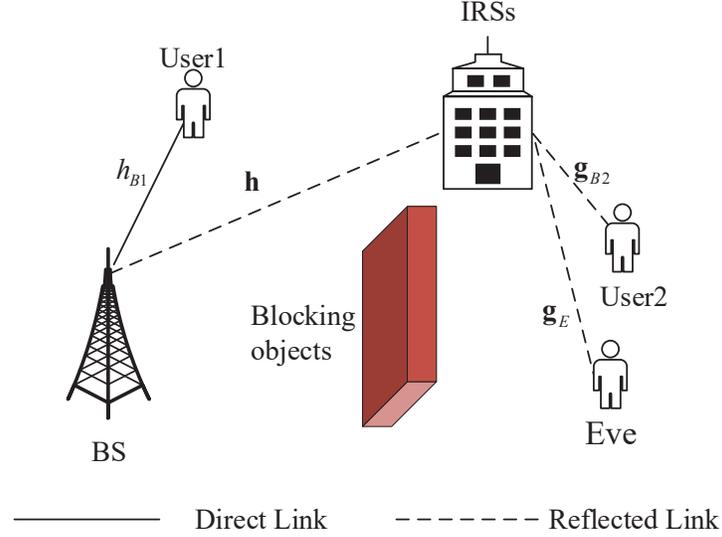}
	\caption{An IRS-aided NOMA secure communication.}
	\label{system_model}
\end{figure}

The small-scale fading vector between the BS and IRSs is denoted by
\begin{equation}
\mathbf{h}=\left[h_1,h_2,\cdots,h_N\right]^T.
\end{equation}

The small-scale fading vectors between the IRSs and user 2 and that between the IRSs and Eve are given by
\begin{equation}
\mathbf{g}_{B2}=\left[g_{2,1},g_{2,2},\cdots,g_{2,N}\right],
\end{equation}
and
\begin{equation}
\mathbf{g}_E=\left[g_{E,1},g_{E,2},\cdots,g_{E,N}\right],
\end{equation}
respectively. The elements in $\mathbf{h}$, $\mathbf{g}_{B2}$ and $\mathbf{g}_E$ obey the Nakagami-$m$ distribution having fading parameters of $m_1$, $m_2$, and $m_3$, respectively.
Moreover, the direct link between the BS and user 1 is modelled by Rayleigh fading, which is denoted by $h_{B1}$.

The BS sends the following signal to the paired NOMA users
\begin{equation}
s=\sqrt{a_1}s_1+\sqrt{a_2}s_2,
\end{equation}
where $s_1$ and $s_2$ are the signal intended for user 1 and user 2, respectively, while $\sqrt{a_1}$ and $\sqrt{a_2}$ are the power allocation factors of user 1 and user 2, respectively. Based on the NOMA protocol, the power allocation coefficients satisfy the condition that $a_1+a_2=1$.

The signal received by user 1 and user 2 are given by
\begin{equation}
y_1=h_{B1}\sqrt{d_{B1}^{-\alpha_{B1}}}Ps+n_1,
\end{equation}
and
\begin{equation}
y_2=\mathbf{g}_{B2}\mathbf{\Phi}\mathbf{h}\sqrt{d_1^{-\alpha_1}d_{B2}^{-\alpha_{B2}}}Ps+n_2,
\end{equation}
respectively,
where $d_{B1}$ and $d_1$ denote the distances from the BS to user 1 and the IRS, respectively. Furthermore, $d_{B2}$ denotes the distance from the IRS to user 2, $\alpha_{B1}$, $\alpha_1$ and $\alpha_{B2}$ represent the path loss exponents of the BS-user 1 link, BS-IRS links and IRS-user 2 links, respectively. 
Still referring to $P$ denotes the transmit power of the BS, $n_1$ and $n_2$ represent additive white Gaussian noise (AWGN) at user 1 and user 2, respectively, which is modeled as a realization of a zero-mean complex circularly symmetric Gaussian variable with variance $\sigma^2$. Additionally, $\mathbf{\Phi}\triangleq \text{diag} [\beta_1\phi_1,\beta_2\phi_2,\dots,\beta_N\phi_N]$ is a diagonal matrix, which represents the effective phase shift applied by all IRSs, where $\beta_n\in(0,1]$ is the amplitude reflection coefficient of the IRSs, while $\phi_n=\exp(j\theta_n)$, $j=\sqrt{-1}$, $\forall{n}=1,2,\dots,N$ and $\theta_n\in[0,2\pi)$, represents the phase shift caused by the $n$-th IRS. 
The signal received by Eve is given by
\begin{equation}
y_E=\mathbf{g}_E\mathbf{\Phi}\mathbf{h}\sqrt{d_1^{-\alpha_1}d_{E}^{-\alpha_{E}}}Ps+n_E,
\end{equation}
 where $d_E$ is the distance form the IRS to Eve, $\alpha_{E}$ denotes the path loss exponent of the IRS-EVE link and $n_E$ represents the AWGN at Eve.

\section{Performance and analysis} \label{Performance and analysis}
In this section, we consider an IRS design similar to \cite{yanyu2020}. Specifically, in order to simultaneously control multiple IRSs, the CSI of the paired NOMA users' channels is assumed to be perfectly known. However, the CSI of Eve is not available. We first derive new channel statistics for both the direct link and reflected links. Then the SOP and the ASC are illustrated in the following subsections.

\subsection{New Channel Statistics}
According to the previous assumption, the instantaneous signal-to-noise ratio (SNR) of user 1 and the instantaneous signal-to-interference-plus-noise ratio (SINR) of user 2 can be expressed as
\begin{equation}\label{gammaB1}
\gamma_{B1}=\rho a_1 |{h}_1|^2d_{B1}^{-\alpha_{B1}},
\end{equation}
and
\begin{equation}\label{gammaB2}
\gamma_{B2}=\frac{a_2|\hat{h}_{B2}|^2d_1^{-\alpha_1}d_{B2}^{-\alpha_{B2}}}{a_1|\hat{h}_{B2}|^2d_1^{-\alpha_1}d_{B2}^{-\alpha_{B2}}+\frac{1}{\rho}},
\end{equation}
respectively, where $\rho$ denotes the transmit SNR and $\hat{h}_{B2}=\sum_{n=1}^{N}|g_{2,n}||h_{n}|$ represents the equivalent channel of the BS-IRS-user 2 links.

The phase shifts are designed for user 2, hence the effective channel gain of Eve cannot be evaluated.
In this paper, we consider the worst-case scenario of the IRS-aided NOMA network, in which all of the BS-IRS-Eve signals are co-phased.
Therefore, the equivalent channel of Eve is similar to that of user 2, which can be expressed as
\begin{equation}
\hat{h}_E=\sum_{n=1}^{N}|g_{E,n}||h_{n}|.
\end{equation}
Therefore, the instantaneous SNR of detecting the information of user 1 and user 2 can be expressed as:
\begin{equation}\label{gammaE}
\gamma_{E_i}=\rho_ea_i|\hat{h}_E|^2d_1^{-\alpha_1}d_{E}^{-\alpha_{E}},
\end{equation}
where $i\in\{1,2\}$.

\begin{lemma}
	The cumulative distribution function (CDF) of $\gamma_{B1}$ is
\begin{equation}\label{CDFB1}
F_{\gamma_{B1}}(x) = 1-e^{-\frac{x}{a_1\rho d_{B_1}^{-\alpha_{B1}}}} ,
\end{equation}
\end{lemma}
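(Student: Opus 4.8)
The plan is to reduce the claim to a one-line change of variable, exploiting the fact that the direct BS--user~1 link is Rayleigh faded. First I would recall the standard fact that, for a unit-power Rayleigh coefficient $h_{B1}$, the power gain $|h_{B1}|^2$ is exponentially distributed with unit mean, so that $\mathbb{P}(|h_{B1}|^2 \le t) = 1 - e^{-t}$ for all $t \ge 0$. This is the only probabilistic ingredient required; none of the reflected-link statistics derived later in the section are needed here.

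Next, reading off \eqref{gammaB1}, observe that $\gamma_{B1} = c\,|h_{B1}|^2$, where $c = \rho a_1 d_{B1}^{-\alpha_{B1}} > 0$ is a deterministic constant. Hence, for $x \ge 0$,
\begin{equation}
F_{\gamma_{B1}}(x) = \mathbb{P}\!\left(|h_{B1}|^2 \le \frac{x}{c}\right) = 1 - e^{-x/c},
\end{equation}
and substituting the value of $c$ yields \eqref{CDFB1}; for $x < 0$ the CDF vanishes, which is consistent with the stated expression on its natural domain.

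The main (and essentially only) point requiring attention is bookkeeping: confirming the power normalisation so that the exponential rate is exactly $1$ rather than some $\Omega \neq 1$, and carrying the scaling constant $c$ correctly through the inverse map $x \mapsto x/c$. I would therefore present this lemma as a short self-contained derivation rather than appealing to any of the machinery developed later in the paper.
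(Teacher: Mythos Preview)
Your proposal is correct and matches the paper's own proof essentially line for line: both invoke the unit-mean exponential distribution of $|h_{B1}|^2$ under Rayleigh fading and then apply the linear change of variable $\gamma_{B1}=c\,|h_{B1}|^2$ with $c=\rho a_1 d_{B1}^{-\alpha_{B1}}$. There is nothing to add.
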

\begin{proof}
	the CDF of $|h_{B1}|^2$ is $F_{|h_{B1}|^2}(x)=1-e^{-x}$. Then, $F_{\gamma_{B1}}$ can be derived as
	\begin{equation}
	\begin{aligned}
	F_{\gamma_{B1}}(x)&=\mathbb{P}\left(\gamma_{B1}<x\right)\\
	&=F_{|h_{B1}|^2}\left(\frac{x}{a_1\rho d_{B_1}^{-\alpha_{B1}}}\right)\\
	&=1-e^{-\frac{x}{a_1\rho d_{B_1}^{-\alpha_{B1}}}} .
	\end{aligned}
	\end{equation}
	 This completes the proof.
\end{proof}
\begin{lemma}\label{lemma 2}
	Recall that the fading parameters of the elements in $\mathbf{h}$ and $\mathbf{g}_{B2}$ are $m_1$ and $m_2$, respectively. The CDF of $\gamma_{B2}$ in the low-SNR regimes and the high-SNR regimes (when $m_1\neq m_2$) are given by
	\begin{equation}\label{cdflowb2}
	F_{\gamma_{B2}}(x)=1-Q_{\frac{1}{2}}\left(\sqrt{\lambda},\sqrt{\frac{x}{N(1-\epsilon)\rho(a_2-a_1x)d_1^{-\alpha_1}d_{B2}^{-\alpha_{B2}}}}\right),
	\end{equation}
and
	\begin{equation}\label{cdfhighb2}
	F_{\gamma_{B2}}^{0_+}(x)=\frac{{m}^N(4m_sm_l)^{-m_sN}}{\Gamma(2m_sN)}\gamma\left( 2m_sN,\frac{2\sqrt{m_sm_lx}}{\sqrt{\rho(a_2-a_1x)d_1^{-\alpha_1}d_{B2}^{-\alpha_{B2}}}}\right),
	\end{equation}
	respectively, where $x<\frac{a_2}{a_1}$, $Q_{\alpha}(\cdot,\cdot)$ is the Marcum Q-function, $\epsilon=\frac{1}{m_1m_2}\left(\frac{\Gamma(m_1+\frac{1}{2})}{\Gamma(m_1)}\right)^2\left(\frac{\Gamma(m_2+\frac{1}{2})}{\Gamma(m_2)}\right)^2$, $\lambda=\frac{N\epsilon}{1-\epsilon}$, ${m}=\frac{\sqrt{\pi}4^{m_s-m_l+1}(m_sm_l)^m_s\Gamma(2m_s)\Gamma(2m_l-2m_s)}{\Gamma(m_s)\Gamma(m_l)\Gamma(m_l-m_s+\frac{1}{2})}$,  $m_l=\max\left(m_1,m_2\right)$, $m_s=\min\left(m_1,m_2\right)$, $\Gamma(\cdot)$~denotes the Gamma function and $\gamma(\cdot,\cdot)$ is the lower incomplete Gamma function.
\end{lemma}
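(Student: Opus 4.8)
The plan is to reduce the CDF of the SINR $\gamma_{B2}$ to the CDF of the squared equivalent channel $|\hat{h}_{B2}|^2$ and then to approximate the law of the scalar sum $\hat{h}_{B2}=\sum_{n=1}^{N}|g_{2,n}||h_n|$ in two ways, one tuned to the low-SNR regime and one tuned to the near-origin (high-SNR) regime. First I would set $c\triangleq d_1^{-\alpha_1}d_{B2}^{-\alpha_{B2}}$ and solve $\gamma_{B2}<x$ for $|\hat{h}_{B2}|^2$: the map $Z\mapsto a_2Zc/(a_1Zc+1/\rho)$ is increasing with supremum $a_2/a_1$, so for $x<a_2/a_1$ one has $\gamma_{B2}<x\iff |\hat{h}_{B2}|^2<\frac{x}{\rho(a_2-a_1x)c}$, hence $F_{\gamma_{B2}}(x)=F_{|\hat{h}_{B2}|^2}\!\left(\frac{x}{\rho(a_2-a_1x)c}\right)$. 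Since $\hat{h}_{B2}$ is a sum of non-negative terms, $|\hat{h}_{B2}|^2=\hat{h}_{B2}^2$ and $F_{|\hat{h}_{B2}|^2}(z)=\mathbb{P}(\hat{h}_{B2}\le\sqrt z)$, so everything reduces to the distribution of $\hat{h}_{B2}$.

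For the low-SNR expression \eqref{cdflowb2} I would invoke a central-limit / moment-matching argument: $\hat{h}_{B2}$ is a sum of $N$ i.i.d.\ products $X_n=|g_{2,n}||h_n|$, hence approximately Gaussian with mean $N\,\mathbb{E}[X_1]$ and variance $N\,\mathrm{Var}(X_1)$. Using the Nakagami-$m$ moments $\mathbb{E}[|h|]=\Gamma(m+\tfrac{1}{2})/(\sqrt m\,\Gamma(m))$ and $\mathbb{E}[|h|^2]=1$ together with independence of $g_{2,n}$ and $h_n$ gives $\mathbb{E}[X_1]^2=\epsilon$ and $\mathbb{E}[X_1^2]=1$, i.e.\ mean $N\sqrt{\epsilon}$ and variance $N(1-\epsilon)$. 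Then $F_{|\hat{h}_{B2}|^2}(z)=\mathbb{P}(-\sqrt z\le\hat{h}_{B2}\le\sqrt z)$, and after standardising, this two-sided Gaussian probability equals $1-Q_{\frac{1}{2}}\big(\sqrt{\lambda},\sqrt{z/(N(1-\epsilon))}\big)$ with $\sqrt{\lambda}=N\sqrt{\epsilon}/\sqrt{N(1-\epsilon)}$; this can be verified from the representation $I_{-1/2}(z)=\sqrt{2/(\pi z)}\cosh z$ of the modified Bessel function appearing in the order-$\tfrac{1}{2}$ Marcum $Q$-function. Substituting $z=\frac{x}{\rho(a_2-a_1x)c}$ yields \eqref{cdflowb2}.

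For the high-SNR expression \eqref{cdfhighb2}, note that $\frac{x}{\rho(a_2-a_1x)c}\to0$ as $\rho\to\infty$, so only the behaviour of $F_{\hat{h}_{B2}}$ near $0$ matters. I would start from the closed-form PDF of the product of two Nakagami variables $X_n=|g_{2,n}||h_n|$ (a modified-Bessel, ``double-Nakagami'' density) and use the small-argument asymptotics of $K_{m_l-m_s}$ to obtain $f_{X_n}(w)\sim A\,w^{2m_s-1}$ as $w\to0$; the constant $A$ is read off from the Bessel expansion, and Legendre's duplication formula recasts $A\,\Gamma(2m_s)$ as the constant ${m}$ in the statement. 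This is precisely where the hypothesis $m_1\neq m_2$ enters: if $m_1=m_2$ the Bessel order is $0$ and $K_0$ carries a logarithm, destroying the clean power law. I would then model $X_n$ near the origin by the Gamma-type density $A\,w^{2m_s-1}e^{-2\sqrt{m_sm_l}\,w}$, matching both the leading coefficient and the exponential decay rate of the Bessel tail, and convolve $N$ copies (Beta-function identity, or Laplace transform) to get $F_{\hat{h}_{B2}}(s)\approx\frac{({m})^N(4m_sm_l)^{-m_sN}}{\Gamma(2m_sN)}\,\gamma\big(2m_sN,\,2\sqrt{m_sm_l}\,s\big)$; setting $s=\sqrt z$ and $z=\frac{x}{\rho(a_2-a_1x)c}$ gives \eqref{cdfhighb2}, whose leading term $\propto x^{m_sN}$ also fixes the secrecy diversity order later.

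The delicate part is the high-SNR step: expanding the double-Nakagami PDF near the origin to pin down ${m}$ (with the duplication-formula bookkeeping), and justifying the Gamma-type surrogate together with the $N$-fold convolution that promotes the bare leading term $\propto s^{2m_sN}$ into the incomplete-Gamma form of \eqref{cdfhighb2}. The only subtlety in the low-SNR step is recognising the two-sided Gaussian probability as a Marcum $Q_{\frac{1}{2}}$ value.
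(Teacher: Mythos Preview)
Your proposal is correct and follows the same overall route as the paper: reduce $F_{\gamma_{B2}}$ to the CDF of the equivalent channel via the monotone inversion $\gamma_{B2}<x\iff|\hat h_{B2}|^2<\frac{x}{\rho(a_2-a_1x)c}$, then plug in two different approximations for the law of $\hat h_{B2}=\sum_n|g_{2,n}||h_n|$ (a CLT/non-central $\chi^2_1$ fit in the low-SNR regime, and a near-origin Gamma surrogate built from the small-argument Bessel expansion of the double-Nakagami density in the high-SNR regime).

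The only difference is one of presentation, not of method: the paper's Appendix~A simply \emph{imports} the two intermediate CDFs of $X=\hat h_{B2}^2/(N(1-\epsilon))$ and $Y=\hat h_{B2}$ from an earlier reference and then performs the change of variables, whereas you reconstruct those CDFs from first principles (Nakagami moments $\Rightarrow$ mean $N\sqrt\epsilon$, variance $N(1-\epsilon)$, hence the $Q_{1/2}$ form; $K_{m_l-m_s}$ asymptotics $\Rightarrow$ the constant ${m}$ via Legendre duplication, hence the incomplete-Gamma form after $N$-fold convolution). Your added detail is exactly what the cited reference supplies, so the two arguments are substantively the same; yours is self-contained and also makes explicit why $m_1\neq m_2$ is needed (the $K_0$ logarithm), which the paper leaves implicit.
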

\begin{proof}
	Please refer to Appendix A.
\end{proof}

\begin{lemma}\label{lemma 3}
	Upon introducing $Z=\sum_{n=1}^{N}|g_{q,n}||h_n|$, where $q\in\{2,E\}$, the expectation of $Z^2$ is given by	
	\begin{equation}\label{E^2}
	\begin{aligned}
	\mathbb{E}(Z^2)
	&=aN\omega^Nd^{-aN}k_1^N\mho ,
	\end{aligned}
	\end{equation}
	where $a=2m_c$, $b=m_c-m_d+\frac{1}{2}$, $c=m_c+m_d+\frac{1}{2}$, $d=2\sqrt{m_cm_d}$, $\omega=\frac{\sqrt{\pi}4^{m_c-m_d+1}(m_cm_d)^{m_c}\Gamma(2m_c)\Gamma(2m_d)}{\Gamma(m_c)\Gamma(m_d)\Gamma\left(m_c+m_d+\frac{1}{2}\right)}$ and $\mho=\mho_1+\mho_2-\mho_3+\mho_4$ with $\mho_1=\frac{aN+1}{d^2}$, $\mho_2=\frac{4ab^2k_2^2}{c^2d^2k_1^2}(N-1)$, $\mho_3=\frac{4abNk_2}{cd^2k_1}$ and $\mho_4=\frac{4(a+1)(b^2+b)}{(c^2+c)d^2}k_3-\frac{4b}{cd^2}k_2$. Furthermore, $k_1={}_{2}F_{1}\left(a,b;c;-1\right)$, $k_2={}_{2}F_{1}\left(a+1,b+1;c+1;-1\right)$ and $k_3={}_{2}F_{1}\left(a+2,b+2;c+2;-1\right)$ are the Gauss hypergeometric function\cite[\textrm{eq}. (9.100)]{gradshteyn2007}. In the case of $q=2$, we have $m_c=\min(m_1,m_2)$,  $m_d=\max(m_1,m_2)$ and $\mathbb{E}(Z^2)$ denotes the average channel gain of the BS-IRS-user 2 links, otherwise, we have $m_c=\min(m_1,m_3)$, $m_d=\max(m_1,m_3)$ and $\mathbb{E}(Z^2)$ denotes the average channel gain of the BS-IRS-Eve links.

\end{lemma}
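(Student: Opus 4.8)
The plan is to reduce $\mathbb{E}(Z^{2})$ to the first two moments of a single product term $X_{n}=|g_{q,n}||h_{n}|$, and then to obtain those two moments from the Laplace transform of a product of two independent Nakagami-$m$ amplitudes. Because the pairs $(|g_{q,n}|,|h_{n}|)$ are independent and identically distributed across $n$, expanding $Z^{2}=\left(\sum_{n=1}^{N}X_{n}\right)^{2}$ and using $\mathbb{E}(X_{n}X_{\ell})=\mathbb{E}(X_{n})\mathbb{E}(X_{\ell})$ for $n\neq\ell$ yields $\mathbb{E}(Z^{2})=N\,\mathbb{E}(X_{1}^{2})+N(N-1)\big(\mathbb{E}(X_{1})\big)^{2}$. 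Hence it suffices to evaluate $\mathbb{E}(X_{1})$ and $\mathbb{E}(X_{1}^{2})$, where $X_{1}$ is the product of a Nakagami-$m_{c}$ and an independent Nakagami-$m_{d}$ amplitude of unit mean square.

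First I would invoke the PDF of the product of two Nakagami variables, which is proportional to $x^{m_{c}+m_{d}-1}K_{m_{c}-m_{d}}(dx)$ with $d=2\sqrt{m_{c}m_{d}}$, and then apply the integral $\int_{0}^{\infty}x^{\mu-1}e^{-sx}K_{\nu}(dx)\,dx$ (see \cite{gradshteyn2007}) to obtain the Laplace transform of a single term in closed form, $\mathcal{L}_{X_{1}}(s)=\mathbb{E}\big(e^{-sX_{1}}\big)=\omega\,(s+d)^{-a}\,{}_{2}F_{1}\big(a,b;c;\frac{s-d}{s+d}\big)$, with $a$, $b$, $c$, $d$, $\omega$ exactly as in the statement (the normalisation $\mathcal{L}_{X_{1}}(0)=1$ is precisely the identity $\omega d^{-a}k_{1}=1$). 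By independence across the $N$ surfaces, $\mathcal{L}_{Z}(s)=\big[\mathcal{L}_{X_{1}}(s)\big]^{N}=\omega^{N}(s+d)^{-aN}\big[\,{}_{2}F_{1}(a,b;c;\frac{s-d}{s+d})\,\big]^{N}$, which already exhibits the $\omega^{N}d^{-aN}k_{1}^{N}$ factor of the claimed expression once $s$ is set to $0$.

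Next, since $\mathbb{E}(Z)=-\mathcal{L}_{Z}'(0)$ and $\mathbb{E}(Z^{2})=\mathcal{L}_{Z}''(0)$, I would differentiate this product twice with respect to $s$. Putting $u(s)=\frac{s-d}{s+d}$ so that $u(0)=-1$, $u'(0)=\frac{2}{d}$, $u''(0)=-\frac{4}{d^{2}}$, and using $\frac{d}{dz}{}_{2}F_{1}(a,b;c;z)=\frac{ab}{c}{}_{2}F_{1}(a+1,b+1;c+1;z)$ together with its iterate, every hypergeometric factor becomes $k_{1}$, $k_{2}$ or $k_{3}$ upon setting $s=0$. Collecting the contribution of $\big[(s+d)^{-aN}\big]''$ gives the $\mho_{1}$ term, the cross term gives $-\mho_{3}$, the $N(N-1)$ part of $\big[{}_{2}F_{1}\big]^{N}$ gives $\mho_{2}$, and the second derivative of a single ${}_{2}F_{1}$ gives $\mho_{4}$; assembling these produces $\mathbb{E}(Z^{2})=aN\omega^{N}d^{-aN}k_{1}^{N}\,\mho$. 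Finally, specialising $(m_{c},m_{d})$ to $(\min(m_{1},m_{2}),\max(m_{1},m_{2}))$ for $q=2$ and to $(\min(m_{1},m_{3}),\max(m_{1},m_{3}))$ for $q=E$ identifies $\mathbb{E}(Z^{2})$ with the average channel gain of the BS--IRS--user~2 and BS--IRS--Eve links, respectively.

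The only substantive work, and the main obstacle, is this last step: the double differentiation of $\omega^{N}(s+d)^{-aN}\big[{}_{2}F_{1}(a,b;c;u(s))\big]^{N}$ and the careful matching of each surviving term to $\mho_{1},\dots,\mho_{4}$. The calculus is elementary, but the iterated chain rule through $u(s)$ combined with the $N$-th power makes it easy to lose or misplace factors — in particular the power of $k_{1}$ attached to the $\mho_{4}$ term — so this bookkeeping has to be carried out with some care.
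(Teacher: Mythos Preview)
Your proposal is correct and follows essentially the same route as the paper's Appendix~B: write $\mathcal{L}_{Z}(s)=\big[\mathcal{L}_{X_{1}}(s)\big]^{N}=\omega^{N}(s+d)^{-aN}\big[{}_{2}F_{1}(a,b;c;\tfrac{s-d}{s+d})\big]^{N}$, use $\mathbb{E}(Z^{2})=\mathcal{L}_{Z}''(0)$, and differentiate with the chain rule and the identity $\tfrac{d}{dz}{}_{2}F_{1}(a,b;c;z)=\tfrac{ab}{c}{}_{2}F_{1}(a+1,b+1;c+1;z)$; the paper merely organises the same computation through the auxiliary functions $J_{1},\dots,J_{4}$ instead of your substitution $u(s)=\tfrac{s-d}{s+d}$. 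Your opening moment decomposition $\mathbb{E}(Z^{2})=N\,\mathbb{E}(X_{1}^{2})+N(N-1)\big(\mathbb{E}(X_{1})\big)^{2}$ and the normalisation remark $\omega d^{-a}k_{1}=1$ are helpful sanity checks that the paper omits, but they are not a different method.
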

\begin{proof}
		Please refer to Appendix B.
\end{proof}
\begin{remark}
	The second moment $\mathbb{E}(Z^2)$ is accurate for the global CSIs. 
	Hence, we observe from~\eqref{E^2} that the expectation of the channel gain for the reflected links is determined by the number of IRSs and the Nakagami-$m$ fading parameters.
\end{remark}

\subsection{Secrecy Outage Probability}\label{section 4}
In the proposed network, the capacity of the main channel for the $i$-th user is given by $C_{B_i}=\log(1+\gamma_{B_i})$, while the capacity of Eve's channel for the $i$-th user is quantified by $C_{E_i}=\log(1+\gamma_{E_i})$. 
As such, the secrecy rate of the $i$-th user can be expressed as
\begin{equation}\label{secrecy capacity}
C_i=\left[C_{B_i}-C_{E_i}\right]^{+},
\end{equation}
where $[x]^+=\max\{x,0\}$.

\subsubsection{SOP analysis}~

Next, we focus our attention on the SOP of user 1 and user 2. 
For the $i$-th user, we note that, if $C_i<R_i$, information transmission at a rate of $R_i$ is compromised. The SOP of the $i$-th user can be expressed as
\begin{equation}\label{SOP}
\begin{aligned}
P_i(R_{i})&=\mathbb{P}\left(C < R_{i} \right)\\
&=\mathbb{P}\left(\log_2\left(\frac{1+\gamma_{Bi}}{1+\gamma_{Ei}}\right)< R_{i} \right)\\
&=\mathbb{P}\left(\gamma_{Bi}<2^{R_{i}}\left(1+\gamma_{Ei}\right)-1\right).
\end{aligned}
\end{equation}
Then we derive the SOP of user 1 and user 2 in the following theorems. 
\begin{theorem}\label{theorem SOP1}
	In the IRS-aided NOMA network considered, the SOP of user 1 is given by
	\begin{equation}\label{SOP1}
     P_1(R_1) \approx 1-e^{\frac{-y_1}{a_1\rho d_{B_1}^{-\alpha_{B_1}}}} ,
	\end{equation}
	where $y_1=2^{R_1}\left(1+a_1\rho_e\mu d_1^{-\alpha_1}d_{E}^{-\alpha_{E}}\right)-1$, and $\mu=\mathbb{E}\left(|\hat{h}_E|^2\right)$ which can be obtained from \textbf{Lemma 3} in the case of $m_c=\min(m_1,m_3)$ and $m_d=\max(m_1,m_3)$.
\end{theorem}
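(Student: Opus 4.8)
The plan is to start from the SOP expression in~\eqref{SOP} specialised to $i=1$, namely $P_1(R_1)=\mathbb{P}(\gamma_{B1}<2^{R_1}(1+\gamma_{E1})-1)$, and to exploit that a clean closed-form CDF is available for $\gamma_{B1}$ (Lemma~1) but not for $\gamma_{E1}$. First I would condition on the eavesdropper's SNR and write
\[
P_1(R_1)=\int_0^{\infty}F_{\gamma_{B1}}(2^{R_1}(1+y)-1)\,f_{\gamma_{E1}}(y)\,dy,
\]
then substitute~\eqref{CDFB1} so that the integrand becomes $1-e^{-(2^{R_1}-1)/(a_1\rho d_{B_1}^{-\alpha_{B1}})}\,e^{-2^{R_1}y/(a_1\rho d_{B_1}^{-\alpha_{B1}})}$. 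The remaining obstacle is the factor $\int_0^{\infty}e^{-sy}f_{\gamma_{E1}}(y)\,dy$: since $\gamma_{E1}=\rho_e a_1|\hat h_E|^2 d_1^{-\alpha_1}d_E^{-\alpha_E}$ and $\hat h_E=\sum_{n=1}^{N}|g_{E,n}||h_n|$ is a sum of products of Nakagami-$m$ variates, its exact PDF has no tractable form and this Laplace transform cannot be evaluated in closed form.

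To circumvent this, I would replace the random variable $\gamma_{E1}$ by its mean, i.e.\ approximate $f_{\gamma_{E1}}(y)\approx\delta(y-\mathbb{E}(\gamma_{E1}))$ with $\mathbb{E}(\gamma_{E1})=\rho_e a_1\mu\,d_1^{-\alpha_1}d_E^{-\alpha_E}$ and $\mu=\mathbb{E}(|\hat h_E|^2)$. This is reasonable because the IRS co-phases and aggregates $N$ contributions, so the effective eavesdropping gain concentrates around its expectation, and it keeps the analysis tractable, which is why the theorem is stated with ``$\approx$''. Under this substitution,
\[
P_1(R_1)\approx F_{\gamma_{B1}}(2^{R_1}(1+\mathbb{E}(\gamma_{E1}))-1)=F_{\gamma_{B1}}(y_1),
\]
with $y_1=2^{R_1}(1+a_1\rho_e\mu d_1^{-\alpha_1}d_E^{-\alpha_E})-1$, exactly as in the statement.

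Finally I would apply Lemma~1 to obtain $F_{\gamma_{B1}}(y_1)=1-e^{-y_1/(a_1\rho d_{B_1}^{-\alpha_{B1}})}$, which is~\eqref{SOP1}, and invoke Lemma~3 with $q=E$ (so that $m_c=\min(m_1,m_3)$ and $m_d=\max(m_1,m_3)$) to supply the explicit value of $\mu=\mathbb{E}(Z^2)$ in terms of $N$ and the Nakagami-$m$ parameters, completing the identification of every quantity in $y_1$. I expect the only genuinely delicate point to be the mean-value replacement of $\gamma_{E1}$; everything after that is a one-line substitution into Lemma~1, whereas an exact SOP would require the unavailable distribution of $\hat h_E$.
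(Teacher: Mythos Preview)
Your proposal is correct and mirrors the paper's own proof: the paper also starts from \eqref{SOP} with $i=1$, replaces the random eavesdropper SNR by its mean $\rho_e a_1\mathbb{E}(|\hat h_E|^2)d_1^{-\alpha_1}d_E^{-\alpha_E}$, and then substitutes Lemmas~1 and~3. Your explicit conditioning integral and the concentration rationale for the mean-value replacement are a bit more detailed than the paper's terse derivation, but the argument is the same.
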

\begin{proof}
	Based on \eqref{gammaE} and \eqref{SOP}, we have
\begin{equation}\label{P1TD}
\begin{aligned}
P_1(R_1) &= \mathbb{P}\left(\gamma_{B_1}<2^{R_{1}}\left(1+\gamma_{E_1}\right)-1\right) \\
&\approx \mathbb{P}\left(\gamma_{B_1}<2^{R_{1}}\left(1+|\hat{h}_E|^2d_1^{-\alpha_1}d_{E}^{-\alpha_{E}}\right)-1\right) \\
&= F_{\gamma_{B1}}\left(2^{R_{1}}\left(1+\rho_ea_1\mathbb{E}\left(|\hat{h}_E|^2\right)d_1^{-\alpha_1}d_{E}^{-\alpha_{E}}\right)-1\right).
\end{aligned}
\end{equation}

 Then, by substituting \eqref{CDFB1} and \eqref{E^2} into \eqref{P1TD}, in the case of $m_c=\min(m_1,m_3)$ and $m_d=\max(m_1,m_3)$, we arrive at \eqref{SOP1} after some further manipulations. This completes the proof.	
\end{proof}

\begin{theorem}\label{theorem SOP2}
	The SOP of user 2 in the low-SNR and high-SNR regimes (when $m_1\neq m_2$) are given by
	\begin{equation}\label{SOP2 low-SNR}
	\begin{aligned}
	P_2^l(R_2)
	&\approx e^{-\frac{\lambda}{2}}\sum_{k=0}^{\infty}\frac{\lambda^k\gamma\left(k+\frac{1}{2},\frac{y_l}{2}\right)}{k!2^k\Gamma\left(k+\frac{1}{2}\right)},
	\end{aligned}
	\end{equation}
	and
	\begin{equation}\label{SOP2-high-SNR}
	P_2^h(R_2)\approx \frac{{m}^N(4m_sm_l)^{-m_sN}}{\Gamma(2m_sN)}\gamma\left( 2m_sN,2\sqrt{m_sm_l}y_h\right),
	\end{equation}
	respectively, 
	where we have
	\begin{equation}
	y_l=\frac{y_2}{N(1-\epsilon)\rho(a_2-a_1y_2)d_1^{-\alpha_1}d_{B2}^{-\alpha_{B2}}},
	\end{equation}
	\begin{equation}
	y_h=\frac{\sqrt{y_2}}{\sqrt{\rho(a_2-a_1y_2)d_1^{-\alpha_1}d_{B2}^{-\alpha_{B2}}}},
	\end{equation}
	and
	\begin{equation}
	y_2=2^{R_{2}}\left(1+a_2\mu\rho_ed_1^{-\alpha_1}d_{E}^{-\alpha_{E}}\right)-1.
	\end{equation}
\end{theorem}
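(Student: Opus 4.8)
The plan is to mirror the proof of \textbf{Theorem~\ref{theorem SOP1}}: first collapse the randomness of Eve's link to a deterministic quantity, then apply \textbf{Lemma~\ref{lemma 2}}. Starting from \eqref{SOP} with $i=2$, I would adopt the worst-case co-phased model for Eve and replace the random gain $|\hat{h}_E|^2$ by its expectation $\mu=\mathbb{E}(|\hat{h}_E|^2)$, which \textbf{Lemma~\ref{lemma 3}} furnishes upon setting $m_c=\min(m_1,m_3)$ and $m_d=\max(m_1,m_3)$. The event $\{\gamma_{B2}<2^{R_2}(1+\gamma_{E_2})-1\}$ then becomes $\{\gamma_{B2}<y_2\}$ with $y_2=2^{R_2}(1+a_2\mu\rho_e d_1^{-\alpha_1}d_E^{-\alpha_E})-1$, so that $P_2(R_2)\approx F_{\gamma_{B2}}(y_2)$, and it remains only to substitute $x=y_2$ into the two CDF expressions of \textbf{Lemma~\ref{lemma 2}}.

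The high-SNR case is then immediate: substituting $x=y_2$ into \eqref{cdfhighb2} and factoring the argument of the lower incomplete Gamma function as $2\sqrt{m_sm_l}\,y_h$ with $y_h=\sqrt{y_2}/\sqrt{\rho(a_2-a_1y_2)d_1^{-\alpha_1}d_{B2}^{-\alpha_{B2}}}$ yields \eqref{SOP2-high-SNR}; the condition $y_2<a_2/a_1$ inherited from \textbf{Lemma~\ref{lemma 2}} keeps the radicand positive.

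For the low-SNR case, substituting $x=y_2$ into \eqref{cdflowb2} gives $P_2^l(R_2)\approx 1-Q_{1/2}(\sqrt{\lambda},\sqrt{y_l})$ with $y_l$ as stated. I would then expand the Marcum $Q$-function via its canonical series $Q_M(\alpha,\beta)=e^{-\alpha^2/2}\sum_{k=0}^{\infty}\frac{(\alpha^2/2)^k}{k!}\frac{\Gamma(M+k,\beta^2/2)}{\Gamma(M+k)}$ with $M=1/2$, $\alpha^2=\lambda$ and $\beta^2=y_l$. Writing $\Gamma(k+\frac{1}{2},y_l/2)=\Gamma(k+\frac{1}{2})-\gamma(k+\frac{1}{2},y_l/2)$ and using $e^{-\lambda/2}\sum_{k\ge 0}(\lambda/2)^k/k!=1$, the ``$\Gamma$ part'' of the sum cancels the leading $1$, leaving precisely $e^{-\lambda/2}\sum_{k=0}^{\infty}\lambda^k\gamma(k+\frac{1}{2},y_l/2)/(k!\,2^k\,\Gamma(k+\frac{1}{2}))$, i.e.\ \eqref{SOP2 low-SNR}.

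The principal obstacle is conceptual rather than algebraic: justifying the mean-value surrogate for $\gamma_{E_2}$ (the source of the ``$\approx$'' signs) within the low-SNR/high-SNR asymptotics, and, on the technical side, selecting the particular series form of the Marcum $Q$-function whose incomplete-Gamma terms telescope cleanly after the split. Once those choices are fixed, only routine substitutions and standard Gamma-function identities remain.
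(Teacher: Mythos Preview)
Your proposal is correct and follows essentially the same approach as the paper: replace $|\hat h_E|^2$ by its mean $\mu$ from \textbf{Lemma~\ref{lemma 3}} (with $m_c=\min(m_1,m_3)$, $m_d=\max(m_1,m_3)$), reduce the SOP to $F_{\gamma_{B2}}(y_2)$, and then plug $x=y_2$ into the two CDF expressions of \textbf{Lemma~\ref{lemma 2}}. Your explicit Marcum-$Q$ series manipulation is exactly what the paper hides under ``further mathematical manipulations'' (and has in fact already recorded as the second equality of~\eqref{CDFX} in Appendix~A), so there is no substantive difference.
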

\begin{proof}
	Based on \eqref{gammaE}, \eqref{SOP} and \textbf{Lemma 2}, the SOP of user 2 in the low-SNR regime can be expressed as
	\begin{equation}\label{P2TDL}
	\begin{aligned}
	P_2^l(R_2) &= \mathbb{P}\left(\gamma_{B_2}<2^{R_{2}}\left(1+\gamma_{E_2}\right)-1\right) \\
	& \approx F_{\gamma_{B2}}\left(2^{R_{2}}\left(1+a_2\rho_e\mathbb{E}\left(|\hat{h}_E|^2\right)d_1^{-\alpha_1}d_{E}^{-\alpha_{E}}\right)-1\right).
	\end{aligned}
	\end{equation}
	
Then, by substituting \eqref{cdflowb2} and \eqref{E^2} into \eqref{P2TDL}, in the case of $m_c=\min(m_1,m_3)$ and $m_d=\max(m_1,m_3)$, \eqref{SOP2 low-SNR} can be obtained after some further mathematical manipulations.
	
The SOP of user 2 in the high-SNR regime can be expressed as
	\begin{equation}\label{P2TDH}
	\begin{aligned}
	P_2^h(R_2)&= \mathbb{P}\left(\gamma_{B_2}<2^{R_{2}}\left(1+\gamma_{E_2}\right)-1\right) \\
	&\approx F_{\gamma_{B2}}^{0_+}\left(2^{R_{2}}\left(1+a_2\rho_e\mathbb{E}\left(|\hat{h}_E|^2\right)d_1^{-\alpha_1}d_{E}^{-\alpha_{E}}\right)-1\right).
	\end{aligned}
	\end{equation}

Then, by substituting \eqref{cdfhighb2} and \eqref{E^2} into \eqref{P2TDH}, in the case of $m_c=\min(m_1,m_3)$ and $m_d=\max(m_1,m_3)$, \eqref{SOP2-high-SNR} can be obtained after some further mathematical manipulations. This completes the proof.
\end{proof}
\begin{proposition}
	Both the SOP of user 1 and the SOP of user 2 are $1$ when the number of IRSs is sufficiently high.
\end{proposition}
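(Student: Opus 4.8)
The plan is to show that each SOP expression derived in Theorems~\ref{theorem SOP1} and~\ref{theorem SOP2} tends to $1$ as $N\to\infty$, by tracking how the relevant argument of the CDF (or of the incomplete Gamma / Marcum-$Q$ representation) behaves in $N$. The key observation is that the eavesdropper's average channel gain $\mu=\mathbb{E}(|\hat h_E|^2)$ given in \textbf{Lemma~\ref{lemma 3}} grows with $N$ — indeed from~\eqref{E^2} one reads off $\mu = a N\,\omega^N d^{-aN} k_1^N\,\mho$ with $\mho=\Theta(N)$, so $\mu$ scales like $N^2$ times an exponential factor in $N$ — and this growth dominates the corresponding growth (or decay) of user~$2$'s useful channel, causing the detection thresholds $y_1,y_2$ and hence the CDF arguments to blow up. So the argument splits into the three cases corresponding to~\eqref{SOP1},~\eqref{SOP2 low-SNR},~\eqref{SOP2-high-SNR}.

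First I would handle user~1. From Theorem~\ref{theorem SOP1}, $P_1(R_1)\approx 1-\exp\!\bigl(-y_1/(a_1\rho d_{B_1}^{-\alpha_{B_1}})\bigr)$ with $y_1=2^{R_1}(1+a_1\rho_e\mu d_1^{-\alpha_1}d_E^{-\alpha_E})-1$. Since $d_{B_1}$ and the power/path-loss constants do not depend on $N$, it suffices to show $y_1\to\infty$, i.e. $\mu\to\infty$; this follows directly from the closed form in \textbf{Lemma~\ref{lemma 3}} once one checks that $\omega^N d^{-aN}k_1^N\to\infty$ (equivalently $\omega d^{-a}k_1>1$ for the relevant Nakagami parameters), the extra $aN\,\mho$ factor only helping. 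Then $\exp(-y_1/\text{const})\to 0$ and $P_1\to1$. I would state the monotone growth of $\mu$ in $N$ as the one substantive sub-claim here and verify it from the hypergeometric/Gamma constants.

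Next, user~2 in the low-SNR regime: in~\eqref{SOP2 low-SNR} we have $\lambda=N\epsilon/(1-\epsilon)$, which grows linearly in $N$, and $y_l=\dfrac{y_2}{N(1-\epsilon)\rho(a_2-a_1y_2)d_1^{-\alpha_1}d_{B2}^{-\alpha_{B2}}}$ with $y_2=2^{R_2}(1+a_2\mu\rho_e d_1^{-\alpha_1}d_E^{-\alpha_E})-1$. Because $\mu$ grows super-linearly (in fact exponentially) in $N$ while the denominator of $y_l$ only gains a single linear factor of $N$, the ratio $y_l\to\infty$; together with $\lambda\to\infty$ one can argue $P_2^l\to1$ — either by recognizing the series as $\mathbb{P}(\chi^2_{\text{nc}} < y_l)$ with noncentrality $\lambda$ and invoking that for fixed-but-growing threshold exceeding the mean the CDF $\to1$, or more elementarily by a dominated-term estimate on $e^{-\lambda/2}\sum_k \lambda^k\gamma(k+\tfrac12,y_l/2)/(k!2^k\Gamma(k+\tfrac12))$ showing the tail mass vanishes. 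For the high-SNR regime, in~\eqref{SOP2-high-SNR} the prefactor $m^N(4m_sm_l)^{-m_sN}/\Gamma(2m_sN)$ and the incomplete Gamma $\gamma(2m_sN,2\sqrt{m_sm_l}\,y_h)$ with $y_h=\sqrt{y_2}/\sqrt{\rho(a_2-a_1y_2)d_1^{-\alpha_1}d_{B2}^{-\alpha_{B2}}}$ must be combined: since the complete Gamma satisfies $\gamma(2m_sN,\infty)=\Gamma(2m_sN)$ and $y_h\to\infty$ (again because $y_2\to\infty$ through $\mu$), the ratio $\gamma(2m_sN,2\sqrt{m_sm_l}y_h)/\Gamma(2m_sN)\to1$ provided the argument $y_h$ outgrows the order $2m_sN$; one then checks $m^N(4m_sm_l)^{-m_sN}\to1$ — or, safer, one just invokes that~\eqref{cdfhighb2} is a genuine CDF so $F^{0_+}_{\gamma_{B2}}(x)\to1$ as $x\to\infty$, and notes $x=y_2\to\infty$.

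The main obstacle is the high-SNR user-2 case: there the ``CDF'' $F^{0_+}_{\gamma_{B2}}$ is itself an asymptotic approximation, so rather than manipulating the messy constant $m^N(4m_sm_l)^{-m_sN}/\Gamma(2m_sN)$ directly, the cleanest route is to argue at the level of the underlying random variable — the useful channel power $|\hat h_{B2}|^2$ is stochastically bounded above in a way that does not keep pace with $\mu\,\rho_e$ in the threshold $y_2$, so $\mathbb{P}(\gamma_{B2}<y_2)\to1$ — and only then remark that this matches the stated closed forms. I would also flag the mild technical point that $y_2<a_2/a_1$ must hold for the expressions to be valid; since $y_2\to\infty$ this eventually fails, which should be read as $\gamma_{B2}$ being capped below $a_2/a_1$ so the event $\{\gamma_{B2}<y_2\}$ becomes the sure event — I would phrase the proposition's conclusion accordingly.
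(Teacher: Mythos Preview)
Your plan matches the paper's: both argue that $\mu=\mathbb{E}(|\hat h_E|^2)\to\infty$ as $N\to\infty$ via \textbf{Lemma~\ref{lemma 3}} and then substitute into \textbf{Theorems~\ref{theorem SOP1}} and~\textbf{\ref{theorem SOP2}}. The sub-claim you leave to be verified --- that the exponential base in~\eqref{E^2} exceeds~$1$ --- is exactly where the paper concentrates its effort, using the Kummer-type evaluation of ${}_2F_1(a,b;1{+}a{-}b;-1)$ together with the Legendre duplication formula $\Gamma(x)\Gamma(x+\tfrac12)=2^{1-2x}\sqrt{\pi}\,\Gamma(2x)$ to collapse the base to an explicit elementary expression in $m_c,m_d$ and then check it is $>1$ for $m_d>m_c\ge\tfrac12$; by contrast, the paper does not spell out the user-2 substitution step (it simply asserts $P_2\to1$ once $\mu\to\infty$) nor the $y_2<a_2/a_1$ cap that you correctly flag, so your treatment of that part is in fact more careful than the paper's.
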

\begin{proof}
	By substituting $N\to\infty$ into \eqref{E^2}, we have
	\begin{equation}\label{mu}
	\begin{aligned}
	\mu&\approx a^2N^2\omega^Nd^{-aN-2}k_1^N\left(1-\frac{2bk_2}{ck_1}\right)^2\\
	&\approx a^2N^2 \left(\frac{\omega k_1}{d}\right)^N\left(1-\frac{2bk_2}{ck_1}\right)^2.
	\end{aligned}
	\end{equation}
	Since we have $c=1+a-b$, $k_1$ can be rewritten as $k_1=\frac{\Gamma(m_c+m_d+\frac{1}{2})\Gamma(1+m_c)}{\Gamma(1+2m_c)\Gamma(m_d+\frac{1}{2})}$.
Let $\mu_1=\frac{\omega k_1}{d}$, then by substituting $k_1$ into $\mu_1$, we have
\begin{equation}\label{mu1}
\begin{aligned}
\mu_1&=\frac{\sqrt{\pi}4^{m_c-m_d+1}(m_cm_d)^{m_c}\Gamma(2m_c)\Gamma(2m_d)}{\Gamma(m_c)\Gamma(m_d)\Gamma\left(m_c+m_d+\frac{1}{2}\right)}\frac{\Gamma(m_c+m_d+\frac{1}{2})\Gamma(1+m_c)}{2\sqrt{m_cm_d}\Gamma(1+2m_c)\Gamma(m_d+\frac{1}{2})}\\
&=\frac{\sqrt{\pi}4^{m_c-m_d+1}(m_cm_d)^{m_c}\Gamma(2m_c)\Gamma(2m_d)\Gamma(1+m_c)}{\Gamma(m_c)2\sqrt{m_cm_d}\Gamma(1+2m_c)\Gamma(m_d+\frac{1}{2})}.
\end{aligned}
\end{equation}
Given that $\Gamma(1+x)=x\Gamma(x)$ and $\Gamma(x)\Gamma(x+\frac{1}{2})=2^{1-2x}\sqrt{\pi}\Gamma(2x)$, \eqref{mu1} can be rewritten as
\begin{equation}
\mu_1=2^{2m_c-1}(m_cm_d)^{m_c-\frac{1}{2}}\Gamma(m_c)\Gamma(m_d). 
\end{equation}
Since $m_d> m_c\geq\frac{1}{2}$, we have $\mu_1>1$. Then by substituting $\mu_1$ into \eqref{mu}, we have $\mu\to\infty$ when $N\to\infty$.
By  substituting $\mu$ in to \textbf{Theorem 1} and \textbf{Theorem 2}, in the case of $N\to\infty$, we have $P_1(R_1)=1$ and $P_2(R_2)=1$. This completes the proof.	
\end{proof}


\subsubsection{Asymptotic SOP and Secrecy Diversity Order Analysis} ~

In order to derive the secrecy diversity order to gain further insights into the network's operation in the high-SNR regime, the asymptotic behavior is analyzed. 
Again, as the worst-case scenario, we assume that Eve has a powerful detection capability, hence all of the reflected signals are co-phased. 
Without loss of generality, it is assumed that the transmit SNR for the paired NOMA users is sufficiently high (i.e., $\rho\to \infty$), and the SNR of the BS-IRS-Eve links is set to arbitrary values.
Please note that the SOP tends to $1$, when Eve's transmit SNR $\rho_e\to\infty$.
The secrecy diversity order can be defined as follows:
\begin{equation}\label{secrecy diversity order}
d_s=-\lim\limits_{\rho\to\infty}\frac{\log P^{\infty}}{\log \rho},
\end{equation}
where $P^{\infty}$ is the asymptotic SOP.
\begin{corollary}\label{CASOP1}
	The asymptotic SOP of user 1 is given by
	\begin{equation}\label{asymptotic SOP Bob1}
	P_1^{\infty}(R_1)=\frac{y_1}{a_1\rho d_{B_1}^{-\alpha_{B_1}}}.
	\end{equation}
\end{corollary}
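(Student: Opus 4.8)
The plan is to start from the exact SOP expression for user~1 given in \textbf{Theorem~\ref{theorem SOP1}}, namely $P_1(R_1)\approx 1-e^{-y_1/(a_1\rho d_{B_1}^{-\alpha_{B_1}})}$, and to extract its leading-order behaviour as $\rho\to\infty$. First I would note that the quantity $y_1=2^{R_1}\bigl(1+a_1\rho_e\mu d_1^{-\alpha_1}d_E^{-\alpha_E}\bigr)-1$ does not depend on $\rho$ (it depends only on $\rho_e$, the fading parameters through $\mu$, and the rate $R_1$), so the whole $\rho$-dependence sits in the exponent. As $\rho\to\infty$ the argument $t\triangleq y_1/(a_1\rho d_{B_1}^{-\alpha_{B_1}})$ tends to $0$, so I would apply the Maclaurin expansion $1-e^{-t}=t-\tfrac{t^2}{2}+\cdots = t+o(t)$ and keep only the first-order term, which immediately yields $P_1^{\infty}(R_1)=y_1/(a_1\rho d_{B_1}^{-\alpha_{B_1}})$, i.e.\ \eqref{asymptotic SOP Bob1}.

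The only subtlety worth spelling out is the order of limits. Since the corollary concerns the regime $\rho\to\infty$ with $\rho_e$ held at an arbitrary fixed value (as stated in the paragraph preceding \eqref{secrecy diversity order}), $y_1$ stays bounded and bounded away from problematic values, so the linearization is legitimate: $P_1(R_1)\sim t$ as $t\to 0$. I would remark that this is consistent with the earlier observation that the SOP tends to $1$ when instead $\rho_e\to\infty$ (there $y_1\to\infty$ and $1-e^{-t}\to 1$), so the two limiting behaviours are not in conflict — they simply correspond to different asymptotic directions. One can also optionally substitute this asymptotic expression into the definition \eqref{secrecy diversity order} to read off the secrecy diversity order of user~1: since $P_1^{\infty}\propto \rho^{-1}$, we get $d_s=1$, which previews the discussion that the diversity order here is governed by the Rayleigh direct link rather than by $N$ or the Nakagami parameters.

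There is essentially no hard step: the argument is a one-line Taylor expansion once \textbf{Theorem~\ref{theorem SOP1}} is invoked. If anything, the ``main obstacle'' is purely expository — making clear that $y_1$ is a $\rho$-independent constant in this regime so that the expansion variable is genuinely small, and being careful that ``asymptotic SOP'' here means the leading term of $P_1(R_1)$ rather than its limit (which would be $0$). A clean write-up therefore reads: substitute $y_1$ into \eqref{SOP1}, use $1-e^{-x}= x+O(x^2)$ for small $x$ with $x=y_1/(a_1\rho d_{B_1}^{-\alpha_{B_1}})\to 0$, discard higher-order terms, and obtain \eqref{asymptotic SOP Bob1}; this completes the proof.
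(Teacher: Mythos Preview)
Your proposal is correct and follows exactly the same approach as the paper: the paper's proof simply states that one expands the exponential in \eqref{SOP1} and extracts the leading-order term to obtain \eqref{asymptotic SOP Bob1}. Your additional remarks about $y_1$ being $\rho$-independent and the order of limits are helpful clarifications but go beyond what the paper writes.
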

\begin{proof}
	By expanding the exponential function in \eqref{SOP1} and extracting the leading-order term, \eqref{asymptotic SOP Bob1} is obtained. This completes the proof.
\end{proof}
\begin{remark}\label{sdo1}
	Upon substituting \eqref{asymptotic SOP Bob1} into \eqref{secrecy diversity order}, the secrecy diversity order of user 1 is $1$.
\end{remark}
\begin{proposition}\label{SOP1 floor}
	The floor of $P_1(R_1)$ in the case of $\rho_e=\rho$ is given by
	\begin{equation}\label{SOP1 lower bound}
	P_{1,\infty}^{\infty}(R_1)=\frac{2^{R_1}\mu d_1^{-\alpha_1}d_E^{-\alpha_{E}}}{d_{B_1}^{-\alpha_{B_1}}}.
	\end{equation}
\end{proposition}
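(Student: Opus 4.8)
The plan is to take the limit of the asymptotic SOP expression as both $\rho\to\infty$ and $\rho_e=\rho\to\infty$ simultaneously, tracking which terms survive. Starting from Corollary \ref{CASOP1}, we have $P_1^{\infty}(R_1)=\frac{y_1}{a_1\rho d_{B_1}^{-\alpha_{B_1}}}$ with $y_1=2^{R_1}\bigl(1+a_1\rho_e\mu d_1^{-\alpha_1}d_{E}^{-\alpha_{E}}\bigr)-1$. First I would substitute $\rho_e=\rho$ into $y_1$, giving $y_1=2^{R_1}-1+2^{R_1}a_1\rho\mu d_1^{-\alpha_1}d_E^{-\alpha_E}$. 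Dividing by $a_1\rho d_{B_1}^{-\alpha_{B_1}}$, the first piece $\frac{2^{R_1}-1}{a_1\rho d_{B_1}^{-\alpha_{B_1}}}\to 0$ as $\rho\to\infty$, while the second piece is independent of $\rho$ and equals exactly $\frac{2^{R_1}\mu d_1^{-\alpha_1}d_E^{-\alpha_E}}{d_{B_1}^{-\alpha_{B_1}}}$, which is the claimed floor \eqref{SOP1 lower bound}.

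The one subtlety I would flag is that $\mu=\mathbb{E}(|\hat h_E|^2)$ is itself a function of $N$ via Lemma \ref{lemma 3} (and, as Proposition 1 shows, $\mu\to\infty$ with $N$), so the "floor" here is understood for fixed $N$ as $\rho,\rho_e\to\infty$ jointly at the same rate; I would state this explicitly. It is also worth noting that the expression $P_1^{\infty}(R_1)$ from Corollary \ref{CASOP1} is itself a leading-order (first-term) approximation of $1-e^{-y_1/(a_1\rho d_{B_1}^{-\alpha_{B_1}})}$, valid because the exponent $\to 0$; since the surviving term is a constant strictly between $0$ and $1$ (assuming the physical parameters make $\frac{2^{R_1}\mu d_1^{-\alpha_1}d_E^{-\alpha_E}}{d_{B_1}^{-\alpha_{B_1}}}<1$), the linearization remains self-consistent and the floor is genuinely approached from below.

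The main obstacle is essentially bookkeeping rather than a deep difficulty: one must be careful that the two limits ($\rho\to\infty$ in the prefactor and $\rho_e=\rho\to\infty$ inside $y_1$) are taken together and that the $\rho$-dependence cancels exactly in the dominant term. A secondary point to address, for rigor, is whether one should start from the exact SOP \eqref{SOP1} rather than the asymptotic form: taking $\rho_e=\rho\to\infty$ in $1-e^{-y_1/(a_1\rho d_{B_1}^{-\alpha_{B_1}})}$ gives $1-e^{-2^{R_1}\mu d_1^{-\alpha_1}d_E^{-\alpha_E}/d_{B_1}^{-\alpha_{B_1}}}$, whose first-order expansion recovers \eqref{SOP1 lower bound}; I would mention this consistency check but present the short computation from Corollary \ref{CASOP1} as the actual proof, concluding with "This completes the proof."
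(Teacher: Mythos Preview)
Your proof is correct and follows exactly the paper's approach: substitute $\rho_e=\rho$ into the asymptotic expression \eqref{asymptotic SOP Bob1} from Corollary~\ref{CASOP1} and simplify, letting the $\rho$-independent term survive. The paper's own proof is just the one-line statement ``By substituting $\rho_e=\rho$ into \eqref{asymptotic SOP Bob1}, after some mathematical manipulations, \eqref{SOP1 lower bound} can be obtained,'' so your version is in fact more explicit than the original.
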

\begin{proof}
	By Substituting $\rho_e=\rho$ into \eqref{asymptotic SOP Bob1}, after some mathematical manipulations, \eqref{SOP1 lower bound} can be obtained. This completes the proof.
\end{proof}

\begin{corollary}\label{CASOP2}
	The asymptotic SOP of user 2 is given by
	\begin{equation}\label{asymptotic SOP2}
	P_2^{\infty}(R_2)=\frac{m^Ny_h^{2m_sN}}{\Gamma(2m_sN+1)}.
	\end{equation}
\end{corollary}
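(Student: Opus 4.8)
The plan is to start from the high-SNR SOP of user~2 obtained in \textbf{Theorem 2}, namely \eqref{SOP2-high-SNR}, and to let $\rho\to\infty$. The key observation is that $y_2=2^{R_2}(1+a_2\mu\rho_e d_1^{-\alpha_1}d_E^{-\alpha_E})-1$ does not depend on the transmit SNR $\rho$ --- it involves only $R_2$, the power-allocation coefficients, Eve's SNR $\rho_e$ (kept fixed at an arbitrary value), and $\mu$, which is a function of $N$ and the Nakagami-$m$ parameters through \textbf{Lemma 3}. Hence the numerator of $y_h$ is $\rho$-free, so $y_h=\sqrt{y_2}\big/\sqrt{\rho(a_2-a_1y_2)d_1^{-\alpha_1}d_{B2}^{-\alpha_{B2}}}$ scales as $\rho^{-1/2}$, and the second argument $2\sqrt{m_sm_l}\,y_h$ of the lower incomplete Gamma function in \eqref{SOP2-high-SNR} vanishes as $\rho\to\infty$.

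I would then expand the lower incomplete Gamma function about zero, $\gamma(s,z)=\sum_{k\ge 0}\frac{(-1)^k z^{s+k}}{k!(s+k)}$, and keep only the leading-order term $\gamma(s,z)\approx z^{s}/s$. With $s=2m_sN$ and $z=2\sqrt{m_sm_l}\,y_h$, and using $\big(2\sqrt{m_sm_l}\big)^{2m_sN}=(4m_sm_l)^{m_sN}$, this gives $\gamma\!\left(2m_sN,2\sqrt{m_sm_l}\,y_h\right)\approx\frac{(4m_sm_l)^{m_sN}}{2m_sN}\,y_h^{2m_sN}$.

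Substituting this back into \eqref{SOP2-high-SNR}, the factor $(4m_sm_l)^{-m_sN}$ cancels the $(4m_sm_l)^{m_sN}$ produced by the expansion, and the identity $\Gamma(2m_sN+1)=2m_sN\,\Gamma(2m_sN)$ reduces the remaining constants, yielding exactly \eqref{asymptotic SOP2}. The only point requiring care --- and the closest thing to an obstacle --- is to verify that truncating the incomplete-Gamma series is valid in the asymptotic sense: the $k$-th discarded term is smaller than the retained one by a factor $y_h^{k}=O(\rho^{-k/2})$, so it is of strictly lower order in $\rho$ and the leading term is indeed the genuine asymptotic SOP. With $P_2^{\infty}(R_2)$ in hand, the secrecy diversity order of user~2 then follows immediately from \eqref{secrecy diversity order}, since $y_h^{2m_sN}=O(\rho^{-m_sN})$ while all the remaining constants are $\rho$-independent.
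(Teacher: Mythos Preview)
Your proposal is correct and follows essentially the same route as the paper: start from \eqref{SOP2-high-SNR}, expand the lower incomplete Gamma function via its power series, and retain the leading-order term. Your write-up is in fact more careful than the paper's, since you explicitly justify why $y_h\to 0$ as $\rho\to\infty$, spell out the cancellation of $(4m_sm_l)^{m_sN}$ and the use of $\Gamma(2m_sN+1)=2m_sN\,\Gamma(2m_sN)$, and verify that the discarded terms are of strictly lower order.
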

\begin{proof}
	Based on \textbf{Theorem 2}, we have the SOP of user 2 in the high-SNR regime.
	Then, by using the expansions of the lower incomplete Gamma function \cite[\text{eq. (8.354.1)}]{gradshteyn2007}, \eqref{SOP2-high-SNR} can be represented as
	\begin{equation}\label{lower incomplete Gamma function}
	\begin{aligned}
	P_2^{\infty}(R_2)&=\frac{{m}^N(4m_sm_l)^{-m_sN}}{\Gamma(2m_sN)}\sum_{k=0}^{\infty}\frac{(-1)^k\left(2\sqrt{m_sm_l}y_h\right)^{2m_sN+k}}{k!\left(2m_sN+k\right)}.
	\end{aligned}
	\end{equation}
	By extracting the leading-order term in \eqref{lower incomplete Gamma function}, \eqref{asymptotic SOP2} can be obtained.
	This completes the proof.
\end{proof}

\begin{remark}\label{sdo2}
	Upon substituting \eqref{asymptotic SOP2} into \eqref{secrecy diversity order}, the secrecy diversity order of user 2 is $m_sN$.
\end{remark}
\begin{remark}
The secrecy diversity order of user 1 is not affected by the number of IRSs and by the Nakagami-$m$ fading parameters. By contrast, the secrecy diversity order of user 2 is affected by the number of IRSs and by the Nakagami-$m$ fading parameters.
\end{remark}

In this paper, based on the assumptions of using perfect SIC for the paired NOMA users and strong detection capability of Eve, the secrecy outage occurrences of user 1 and user 2 are independent. As a consequence, we define the SOP for the network as that of either the outage of user 1 or of user 2.
\begin{equation}\label{systemSOP}
P_{1,2}=1-(1-P_1)(1-P_2),
\end{equation}
where $P_1$ and $P_2$ are given by \textbf{Theorem \ref{theorem SOP1}} and \textbf{Theorem \ref{theorem SOP2}}, respectively.
\begin{proposition}
	The secrecy diversity order of the network can be expressed as
	\begin{equation}\label{SDO}
	d_s=\min(1,m_sN).
	\end{equation}
\end{proposition}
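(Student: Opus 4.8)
The plan is to obtain the network secrecy diversity order by combining the per-user asymptotic SOP expressions of Corollaries~\ref{CASOP1} and~\ref{CASOP2} via the parallel-outage relation~\eqref{systemSOP}, and then evaluating the limit in~\eqref{secrecy diversity order}. First I would expand $P_{1,2}=P_1+P_2-P_1P_2$ and substitute the leading-order terms, namely $P_1^{\infty}(R_1)=\frac{y_1}{a_1\rho d_{B_1}^{-\alpha_{B_1}}}$ from~\eqref{asymptotic SOP Bob1} and $P_2^{\infty}(R_2)=\frac{m^Ny_h^{2m_sN}}{\Gamma(2m_sN+1)}$ from~\eqref{asymptotic SOP2}. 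Since $y_1$ and $y_2$ (hence $y_h$) depend only on the fixed quantities $\rho_e$ and $\mu$ and not on $\rho$, one has $P_1^{\infty}\propto\rho^{-1}$ and, because $y_h\propto\rho^{-1/2}$, also $P_2^{\infty}\propto\rho^{-m_sN}$. The cross term $P_1P_2\propto\rho^{-(1+m_sN)}$ is of strictly higher order and can be dropped, so $P^{\infty}_{1,2}$ behaves as $c_1\rho^{-1}+c_2\rho^{-m_sN}$ for positive constants $c_1,c_2$.

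Next I would note that a sum of two power laws is asymptotically governed by the more slowly decaying one: factoring out $\rho^{-\min(1,m_sN)}$ leaves a factor that converges to a strictly positive constant, whose logarithm is $o(\log\rho)$. Hence $\lim_{\rho\to\infty}\frac{\log\left(c_1\rho^{-1}+c_2\rho^{-m_sN}\right)}{\log\rho}=-\min(1,m_sN)$, and substituting this into~\eqref{secrecy diversity order} gives $d_s=\min(1,m_sN)$, which is~\eqref{SDO}. Equivalently, the network diversity order equals the minimum of the two individual secrecy diversity orders established in Remarks~\ref{sdo1} and~\ref{sdo2}, in line with the usual weakest-link behaviour of a union of outage events.

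The step requiring the most care is the scaling bookkeeping: the argument relies on $\rho_e$ (Eve's SNR) being held fixed while $\rho\to\infty$; if instead $\rho_e=\rho$, then as shown in Proposition~\ref{SOP1 floor} the SOP exhibits an error floor and the diversity order collapses to zero, so this hypothesis must be stated explicitly. I would also point out that $P_2^{\infty}$ in~\eqref{asymptotic SOP2} is the high-SNR expression valid for $m_1\neq m_2$; for $m_1=m_2$ the same $\rho$-exponent $m_sN$ is recovered from the corresponding limiting CDF of $\gamma_{B2}$, so the final formula $d_s=\min(1,m_sN)$ is unchanged. Beyond these caveats the derivation is a short limit computation involving no heavy algebra.
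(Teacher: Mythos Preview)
Your proposal is correct and follows essentially the same route as the paper: expand $P_{1,2}=P_1+P_2-P_1P_2$, substitute the asymptotic expressions from Corollaries~\ref{CASOP1} and~\ref{CASOP2}, discard the higher-order cross term, and read off the dominant $\rho$-exponent as $\min(1,m_sN)$. Your added remarks on the fixed-$\rho_e$ hypothesis and the $m_1=m_2$ case are sensible caveats that go slightly beyond the paper's own (terser) argument, but the core strategy is identical.
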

\begin{proof}
	Based on \textbf{Corollary \ref{CASOP1}} as well as \textbf{Corollary \ref{CASOP2}}, and upon substituting \eqref{asymptotic SOP Bob1} and \eqref{asymptotic SOP2} into \eqref{systemSOP}, the asymptotic SOP for the network can be expressed as
	\begin{equation}\label{system SOP}
	\begin{aligned}
	P_{1,2}^{\infty}&=P_1^{\infty}+P_2^{\infty}-P_1^{\infty}P_2^{\infty}\\
	&\approx \begin{cases}
	\frac{m^N}{\Gamma(2m_sN+1)}\left(\frac{y_2}{\rho(a_2-a_1y_2)d_1^{-\alpha_1}d_{B2}^{-\alpha_{B2}}}\right)^{m_sN}, \quad &1>m_sN\\
	\frac{y_1}{a_1\rho d_{B_1}^{-\alpha_{B_1}}}, \qquad &1<m_sN
	\end{cases}.
	\end{aligned}
	\end{equation}

By substituting \eqref{system SOP} into \eqref{secrecy diversity order}, we arrive at \eqref{SDO}. The proof is completed.
\end{proof}
\begin{remark}\label{remark 6}
	The secrecy diversity order of the network is determined by the smallest between $1$ and $m_sN$.
\end{remark}
\textbf{Remark \ref{remark 6}} provides insightful guidelines for improving the SOP of the network considered by invoking IRS-aided NOMA. 
The SOP of the network is determined by user 1 in the case of $N\geq2$, because the Nakagami-$m$ fading parameters obey $m\geq\frac{1}{2}$. 
Specifically, the SOP of the network is determined by user 1, when the links between the BS and IRS as well as that between the IRS and user 2 is Rayleigh or 
Rician fading.

\subsection{Average Secrecy Capacity Analysis}\label{section 5}

\subsubsection{Approximate ASC}~

In this subsection, we derive analytical expressions for the ASC of the network. The secrecy capacity in \eqref{secrecy capacity} can be rewritten as
\begin{equation}\label{SC}
C_i=
\begin{cases}
\log_2\left(1+\gamma_{B_i}\right)-\log_2\left(1+\gamma_{E_i}\right), \quad &\gamma_{B_i}>\gamma_{E_i}\\
0, &\gamma_{B_i}\leq\gamma_{E_i}
\end{cases}.
\end{equation}
The approximate ASC can be expressed as
\begin{equation}\label{approximate ASC}
C_i\approx R_{B_i}-R_{E_i},
\end{equation}
where $R_{B_i}$ and $R_{E_i}$ are the ergodic rates of the paired NOMA users and Eve, respectively, which can be expressed as
\begin{equation}\label{ERB}
R_{B_i}=\mathbb{E}\big[\log_2\left(1+\gamma_{B_i}\right)\big],
\end{equation}
and
\begin{equation}\label{ERE}
R_{E_i}=\mathbb{E}\big[\log_2\left(1+\gamma_{E_i}\right)\big],
\end{equation}
respectively. 

The ergodic rates of the paired NONA users have been studied in \cite{yanyu2020}, and that of Eve is necessary for obtaining the ASC.
Hence, the ergodic rate of Eve and the approximate ASC are presented in the following theorems.

\begin{theorem}\label{theorem 4}
	In the IRS-aided NOMA network considered, the ergodic rate of Eve is given by
	\begin{equation}\label{ERE1}
	R_{E_i}=\frac{1}{\ln(2)}e^{-\frac{\lambda_e}{2}}\sum_{k=0}^{\infty}
	\frac{\lambda_e^k}{k!2^k\Gamma\left(k+\frac{1}{2}\right)}\sum_{l=0}^{u_1}\omega_{1,l}\mathcal{J}_{1,i}(x_{l}),
	\end{equation}
	where $x_{l}$ is the $l$-th root of the Laguerre polynomial $L_n(x)$, and the weight $\omega_{1,l}$ is given by
	\begin{equation}
	w_{1,l}={\frac {x_{l}}{\left(n+1\right)^{2}\left[L_{n+1}\left(x_{l}\right)\right]^{2}}},
	\end{equation}
	and
	\begin{equation}
	\mathcal{J}_{1,i}(x)=\frac{\Gamma\left(l+\frac{1}{2}\right)-\gamma\left(l+\frac{1}{2},\frac{t}{a_i}\right)}{1+x}e^x ,
	\end{equation}
	with $t=\frac{x}{2\rho_eN(1-\epsilon_e)d_1^{-\alpha_1}d_{E}^{-\alpha_E}}$.
\end{theorem}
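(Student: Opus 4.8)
The plan is to start from the definition of the ergodic rate in \eqref{ERE} and reduce it to a single integral over the complementary CDF of $\gamma_{E_i}$. Using the standard identity $\mathbb{E}[\log_2(1+X)]=\frac{1}{\ln 2}\int_0^\infty\frac{1-F_X(x)}{1+x}\,dx$, which follows from integration by parts, the task reduces to characterising $1-F_{\gamma_{E_i}}(x)$ for $\gamma_{E_i}=\rho_e a_i|\hat h_E|^2 d_1^{-\alpha_1}d_E^{-\alpha_E}$.

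The key observation is that $\hat h_E=\sum_{n=1}^{N}|g_{E,n}||h_n|$ has exactly the same algebraic form as the equivalent channel $\hat h_{B2}$ treated in \textbf{Lemma \ref{lemma 2}}; the only differences are that the IRS--Eve hops carry the Nakagami parameter $m_3$ rather than $m_2$, and that $\gamma_{E_i}$ is a genuine SNR with no interference term in the denominator. Hence the same Laplace-transform and moment-matching argument used in Appendix~A --- approximating the sum of products of Nakagami magnitudes by a scaled noncentral chi-square variable with one degree of freedom --- applies almost verbatim and yields, in the low-SNR regime,
\begin{equation}
1-F_{\gamma_{E_i}}(x)=Q_{\frac{1}{2}}\!\left(\sqrt{\lambda_e},\ \sqrt{\frac{x}{N(1-\epsilon_e)\rho_e a_i d_1^{-\alpha_1}d_E^{-\alpha_E}}}\right),
\end{equation}
where $\epsilon_e$ and $\lambda_e=\frac{N\epsilon_e}{1-\epsilon_e}$ are the counterparts of $\epsilon$ and $\lambda$ with $m_2$ replaced by $m_3$.

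I would then insert the series form of the Marcum $Q$-function, $Q_{1/2}(a,b)=e^{-a^2/2}\sum_{k=0}^{\infty}\frac{(a^2/2)^k}{k!}\frac{\Gamma(k+\frac{1}{2},\,b^2/2)}{\Gamma(k+\frac{1}{2})}$, interchange summation and integration (legitimate since every summand is nonnegative), and split the upper incomplete Gamma function by $\Gamma(k+\frac{1}{2},u)=\Gamma(k+\frac{1}{2})-\gamma(k+\frac{1}{2},u)$. This leaves, for each $k$, the integral $\int_0^\infty\frac{\Gamma(k+\frac{1}{2})-\gamma(k+\frac{1}{2},t(x)/a_i)}{1+x}\,dx$ with $t(x)=\frac{x}{2\rho_e N(1-\epsilon_e)d_1^{-\alpha_1}d_E^{-\alpha_E}}$, which has no elementary closed form. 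The final step is to evaluate it by Gauss--Laguerre quadrature: writing the integrand as $e^{-x}$ times $e^x(\Gamma(k+\frac{1}{2})-\gamma(k+\frac{1}{2},t(x)/a_i))/(1+x)$ and applying $\int_0^\infty e^{-x}g(x)\,dx\approx\sum_{l=0}^{u_1}\omega_{1,l}\,g(x_l)$, with $x_l$ the roots of the Laguerre polynomial $L_{u_1}$ and $\omega_{1,l}$ the associated weights, reproduces \eqref{ERE1}, $\mathcal{J}_{1,i}$ being the bracketed function.

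The main obstacle is the second step: transferring the channel-statistics approximation of \textbf{Lemma \ref{lemma 2}} to $\gamma_{E_i}$ and verifying that nothing beyond the parameter relabelling ($m_3$ for $m_2$), the suppression of the SINR denominator, and Eve's own path-loss/SNR constants is required. Once the complementary CDF is in hand, everything that follows is the routine combination of the Marcum-$Q$ series expansion and the Gauss--Laguerre rule; one merely has to keep the index bookkeeping consistent, since the running index appearing inside $\mathcal{J}_{1,i}$ is the Marcum series index $k$ rather than the quadrature node index $l$.
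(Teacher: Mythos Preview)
Your proposal is correct and follows essentially the same route as the paper's own proof in Appendix~C: derive the CDF of $\gamma_{E_i}$ by relabelling the parameters of \textbf{Lemma~\ref{lemma 2}} (with $m_3$ in place of $m_2$ and no SINR denominator), convert the expectation to $\frac{1}{\ln 2}\int_0^\infty\frac{1-F_{\gamma_{E_i}}(x)}{1+x}\,dx$ by integration by parts, expand the Marcum $Q$-function in its series form, and finish with Gauss--Laguerre quadrature. Your closing remark about the index bookkeeping is also apt: the paper's statement of $\mathcal{J}_{1,i}$ carries the same $k$/$l$ conflation you flagged.
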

\begin{proof}
	Please refer to Appendix C.
\end{proof}

\paragraph{\bf{Approach 1}}~
\begin{theorem}
	In the IRS-aided NOMA network considered, the approximate ASC of user 1 can be represented as
	\begin{equation}\label{ASC1}
	C_1\approx -\frac{1}{\ln(2)}e^{\frac{1}{a_1\rho d_{B_1}^{-\alpha_{B_1}}}}E_i\left(-\frac{1}{a_1\rho d_{B_1}^{-\alpha_{B_1}}}\right)-\frac{1}{\ln(2)}e^{-\frac{\lambda_e}{2}}\sum_{k=0}^{\infty}
	\frac{\lambda_e^k}{k!2^k\Gamma\left(k+\frac{1}{2}\right)}\sum_{l=0}^{u_1}\omega_{1,l}\mathcal{J}_{1,1}(x_{l}),
	\end{equation}
	where $E_i(\cdot)$ is the exponential integral.
\end{theorem}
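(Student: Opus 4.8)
The plan is to read off the claimed expression as the difference of two ergodic rates via the high‑SNR decoupling already established in \eqref{approximate ASC}, namely $C_1 \approx R_{B_1} - R_{E_1}$. The second term is not new work at all: it is exactly \eqref{ERE1} of \textbf{Theorem \ref{theorem 4}} specialised to $i=1$, which reproduces verbatim the double summation $\frac{1}{\ln(2)}e^{-\lambda_e/2}\sum_{k}\frac{\lambda_e^k}{k!2^k\Gamma(k+\frac12)}\sum_{l}\omega_{1,l}\mathcal{J}_{1,1}(x_l)$ appearing with a minus sign in \eqref{ASC1}. So the only genuine computation is a closed form for $R_{B_1}=\mathbb{E}\big[\log_2(1+\gamma_{B_1})\big]$.

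For that I would start from \eqref{ERB} and put the ergodic rate in its standard complementary‑CDF form by integration by parts, $R_{B_1}=\frac{1}{\ln(2)}\int_0^\infty \frac{1-F_{\gamma_{B1}}(x)}{1+x}\,dx$. Substituting the CDF from \textbf{Lemma 1}, \eqref{CDFB1}, this collapses to $R_{B_1}=\frac{1}{\ln(2)}\int_0^\infty \frac{e^{-x/(a_1\rho d_{B_1}^{-\alpha_{B_1}})}}{1+x}\,dx$. I would then invoke the exponential–integral identity $\int_0^\infty \frac{e^{-\mu x}}{1+x}\,dx = -e^{\mu}E_i(-\mu)$ \cite[eq. (3.352.4)]{gradshteyn2007} with $\mu = 1/(a_1\rho d_{B_1}^{-\alpha_{B_1}})$, which yields precisely the first term $-\frac{1}{\ln(2)}e^{1/(a_1\rho d_{B_1}^{-\alpha_{B_1}})}E_i\!\big(-1/(a_1\rho d_{B_1}^{-\alpha_{B_1}})\big)$ of \eqref{ASC1}. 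Combining $R_{B_1}$ with $R_{E_1}$ according to \eqref{approximate ASC} then gives \eqref{ASC1}.

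The only delicate point — and it is an approximation rather than an obstacle to a rigorous step — is the passage from the exact secrecy capacity \eqref{SC}, which contains the $[\cdot]^+$ operator and couples $\gamma_{B_1}$ and $\gamma_{E_1}$ inside a single logarithm, to the separable form $\mathbb{E}[\log_2(1+\gamma_{B_1})]-\mathbb{E}[\log_2(1+\gamma_{E_1})]$. I would justify this exactly as elsewhere in the paper: it is an upper bound by Jensen/monotonicity that becomes tight in the regime where the main link dominates (large $\rho$ for user~1), which is why the statement is written with $\approx$. Everything after that reduction is routine: one line of exponential‑integral evaluation for $R_{B_1}$, and a direct quotation of \textbf{Theorem \ref{theorem 4}} for $R_{E_1}$. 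This completes the plan.
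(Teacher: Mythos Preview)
Your proposal is correct and follows essentially the same route as the paper: decompose via \eqref{approximate ASC} into $C_1\approx R_{B_1}-R_{E_1}$, quote \textbf{Theorem~\ref{theorem 4}} with $i=1$ for the second term, and insert the closed form for $R_{B_1}$ to obtain \eqref{ASC1}. The only difference is cosmetic: the paper simply cites \cite{yanyu2020} for the expression $R_{B_1}=-\frac{1}{\ln(2)}e^{1/(a_1\rho d_{B_1}^{-\alpha_{B_1}})}E_i\!\big(-1/(a_1\rho d_{B_1}^{-\alpha_{B_1}})\big)$, whereas you derive it from \textbf{Lemma~1} via the complementary-CDF integral and \cite[eq.~(3.352.4)]{gradshteyn2007}; your derivation is self-contained but otherwise identical in spirit.
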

\begin{proof}
	Based on \cite{yanyu2020}, the ergodic rate of user 1 is
	\begin{equation}\label{ERB1}
	R_{B_1}=-\frac{1}{\ln(2)}e^{\frac{1}{a_1\rho d_{B_1}^{-\alpha_{B_1}}}}E_i\left(-\frac{1}{a_1\rho d_{B_1}^{-\alpha_{B_1}}}\right).
	\end{equation}
	By Substituting \eqref{ERE1} and \eqref{ERB1} into \eqref{approximate ASC}, \eqref{ASC1} can be obtained. This completes the proof.
\end{proof}

\begin{theorem}
	In the IRS-aided NOMA network considered, the approximate ASC of user 2 can be represented as
	\begin{equation}\label{ASC2}
	\begin{aligned}
	C_2\approx&\log_2\left(1+\frac{a_2}{a_1}\right)-\frac{1}{\ln(2)}e^{-\frac{\lambda}{2}}\sum_{i=0}^{\infty}\frac{\lambda^i}{i!2^i\Gamma\left(i+\frac{1}{2}\right)}\sum_{p=0}^{u_2}\omega_p\mathcal{J}_2(t_p)\\
	&-\frac{1}{\ln(2)}e^{-\frac{\lambda_e}{2}}\sum_{k=0}^{\infty}
	\frac{\lambda_e^k}{k!2^k\Gamma\left(k+\frac{1}{2}\right)}\sum_{l=0}^{u_1}\omega_{1,l}\mathcal{J}_{1,2}(x_{l}),
	\end{aligned}
	\end{equation}
	where  $\omega_p=\frac{\pi}{u_2}$, $t_p=\cos\left(\frac{2p-1}{2u_2}\pi\right)$ and
	\begin{equation}
	\mathcal{J}_2(t)=\frac{\gamma\left(p+\frac{1}{2},\frac{t+1}{N(1-\epsilon)\rho a_1(1-t)d_1^{-\alpha_1}d_{B_2}^{-\alpha_{B_2}}}\right)}{1+t+\frac{2a_1}{a_2}}\sqrt{1-t^2}.
	\end{equation}
\end{theorem}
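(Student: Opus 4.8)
The plan is to start from the high-SNR decomposition \eqref{approximate ASC}, $C_2\approx R_{B_2}-R_{E_2}$. The ergodic rate $R_{E_2}$ of Eve is already supplied by \textbf{Theorem \ref{theorem 4}} with $i=2$, and it furnishes the last double sum of \eqref{ASC2}; hence the only term that still has to be worked out is the ergodic rate of user~2, $R_{B_2}=\mathbb{E}\big[\log_2(1+\gamma_{B_2})\big]$. From \eqref{gammaB2}, $\gamma_{B_2}$ is almost surely bounded above by $a_2/a_1$, so the usual identity $\mathbb{E}[\log_2(1+X)]=\frac{1}{\ln 2}\int_0^{\infty}\frac{1-F_X(x)}{1+x}\,dx$ collapses onto the finite range $[0,a_2/a_1)$, and separating the constant part gives
\begin{equation}
R_{B_2}=\log_2\!\left(1+\frac{a_2}{a_1}\right)-\frac{1}{\ln 2}\int_0^{a_2/a_1}\frac{F_{\gamma_{B_2}}(x)}{1+x}\,dx,
\end{equation}
which is the origin of the leading term $\log_2(1+a_2/a_1)$ in \eqref{ASC2}.

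Next I would substitute the low-SNR CDF of \textbf{Lemma \ref{lemma 2}}, \eqref{cdflowb2}, after expanding the Marcum $Q$-function via its Poisson/incomplete-Gamma mixture representation, i.e.\ the same series that yields \eqref{SOP2 low-SNR}: $F_{\gamma_{B_2}}(x)=e^{-\lambda/2}\sum_{i\ge 0}\frac{\lambda^i}{i!\,2^i\,\Gamma(i+\frac12)}\,\gamma\!\big(i+\tfrac12,\,b(x)\big)$, where $b(x)=\frac{x}{2N(1-\epsilon)\rho(a_2-a_1x)d_1^{-\alpha_1}d_{B_2}^{-\alpha_{B_2}}}$ is half the square of the second Marcum-$Q$ argument in \eqref{cdflowb2}. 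Interchanging summation and integration, which is legitimate because the Poisson-weighted series converges uniformly on the compact interval, yields
\begin{equation}
\int_0^{a_2/a_1}\frac{F_{\gamma_{B_2}}(x)}{1+x}\,dx=e^{-\frac{\lambda}{2}}\sum_{i=0}^{\infty}\frac{\lambda^i}{i!\,2^i\,\Gamma(i+\frac12)}\int_0^{a_2/a_1}\frac{\gamma\!\big(i+\tfrac12,\,b(x)\big)}{1+x}\,dx.
\end{equation}

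Then I would evaluate the inner integral through the linear change of variables $x=\frac{a_2}{2a_1}(1+t)$, $t\in[-1,1]$, under which $1+x\mapsto\frac{a_2}{2a_1}\big(1+t+\frac{2a_1}{a_2}\big)$, the factor $a_2-a_1x\mapsto\frac{a_2}{2}(1-t)$, and the Jacobian $\frac{a_2}{2a_1}$ cancels against the $1/(1+x)$ prefactor, so that the incomplete-Gamma argument is brought to the form appearing in $\mathcal{J}_2$. Writing the transformed integrand on $[-1,1]$ as $\mathcal{J}_2(t)/\sqrt{1-t^2}$ and applying Chebyshev--Gauss quadrature of the first kind, $\int_{-1}^{1}\frac{f(t)}{\sqrt{1-t^2}}\,dt\approx\frac{\pi}{u_2}\sum_{p=1}^{u_2}f(t_p)$ with $t_p=\cos\frac{2p-1}{2u_2}\pi$, turns it into $\sum_p\omega_p\mathcal{J}_2(t_p)$ with $\omega_p=\pi/u_2$. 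Substituting this into $R_{B_2}$ and then into \eqref{approximate ASC} together with $R_{E_2}$ from \textbf{Theorem \ref{theorem 4}} produces \eqref{ASC2}.

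The main obstacle I anticipate is the bookkeeping in the two chained transformations. First, the Marcum-$Q$ CDF must be expanded into its incomplete-Gamma series \emph{before} any integration is attempted, since $\int(1-Q_{1/2})/(1+x)\,dx$ admits no convenient closed form by itself; second, the change of variables has to be chosen so that the $1/(1+x)$ weight and the $1/(a_2-a_1x)$ factor hidden inside the incomplete Gamma are simplified \emph{simultaneously}, which is precisely what $x=\frac{a_2}{2a_1}(1+t)$ accomplishes. The quadrature step is then routine, but since it is only an approximation it is, together with the approximation already present in \eqref{approximate ASC}, the reason the statement is phrased with ``$\approx$''.
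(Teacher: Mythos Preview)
Your proposal is correct and follows essentially the same route as the paper: decompose $C_2\approx R_{B_2}-R_{E_2}$ via \eqref{approximate ASC}, take $R_{E_2}$ directly from \textbf{Theorem~\ref{theorem 4}}, and supply $R_{B_2}$ in the Chebyshev--Gauss form of \eqref{ERB2}. The only difference is one of presentation: the paper simply imports \eqref{ERB2} from \cite{yanyu2020}, whereas you reconstruct it in full---the finite-support identity $R_{B_2}=\log_2(1+a_2/a_1)-\tfrac{1}{\ln 2}\int_0^{a_2/a_1}\!\frac{F_{\gamma_{B_2}}(x)}{1+x}\,dx$, the Poisson/incomplete-Gamma expansion of the Marcum $Q$-CDF, the substitution $x=\tfrac{a_2}{2a_1}(1+t)$, and the Chebyshev--Gauss quadrature---which is precisely the derivation underlying the cited result.
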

\begin{proof}
	Based on \cite{yanyu2020}, the ergodic rate of user 2 is
	\begin{equation}\label{ERB2}
	R_{B_2}\approx\log_2\left(1+\frac{a_2}{a_1}\right)-\frac{1}{\ln(2)}e^{-\frac{\lambda}{2}}\sum_{i=0}^{\infty}\frac{\lambda^i}{i!2^i\Gamma\left(i+\frac{1}{2}\right)}\sum_{p=0}^{u_2}\omega_p\mathcal{J}_2(t_p).
	\end{equation}
	By substituting \eqref{ERE1} and \eqref{ERB2} into \eqref{approximate ASC}, \eqref{ASC2} can be obtained. This completes the proof.
\end{proof}


\paragraph{\bf{Approach 2}}~

For the approximate ASC of user 2, we provide a more convenient approach.
It is worth noting that \eqref{A2} also produces highly accurate results, despite the computational complexity reduction compared to \eqref{ASC2}, as it will be demonstrated in Section \ref{NR}.
\begin{theorem}
	In the IRS-aided NOMA network considered, the approximate ASC of user 2 can be represented as
	\begin{equation}\label{A2}
	C_2^b\approx \log_2\left(\frac{\mu_2\rho d_1^{-\alpha_1}d_{B_1}^{-\alpha_{B_1}}+1}{(a_1\mu_2\rho d_1^{-\alpha_1}d_{B_1}^{-\alpha_{B_1}}+1)(1+a_2\mu d_1^{-\alpha_1}d_{E}^{-\alpha_{E}}\rho_e)}\right),
	\end{equation}
	 where $\mu_2$ is the expectation of $|\hat{h}_{B_2}|^2$, which can be obtained from \textbf{Lemma 3} in the case of $m_c=\min(m_1,m_2)$ and $m_d=\max(m_1,m_2)$.
\end{theorem}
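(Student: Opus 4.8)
The plan is to start from the approximate secrecy-capacity decomposition already set up in \eqref{approximate ASC}, namely $C_2^b\approx R_{B_2}-R_{E_2}$, and to evaluate each ergodic rate by pulling the expectation through the logarithm using the standard Jensen-type approximation $\mathbb{E}\big[\log_2(1+X)\big]\approx\log_2\big(1+\mathbb{E}(X)\big)$. The end product is then obtained by merging the resulting logarithms into the single ratio displayed in \eqref{A2}.

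First I would rewrite the SINR of user 2 from \eqref{gammaB2}. Writing $u=|\hat{h}_{B2}|^2 d_1^{-\alpha_1}d_{B2}^{-\alpha_{B2}}$, multiplying numerator and denominator by $\rho$, and using the NOMA power constraint $a_1+a_2=1$ in the numerator, one gets
\begin{equation}
1+\gamma_{B_2}=\frac{\rho u+1}{a_1\rho u+1}.
\end{equation}
Applying $\log_2$, splitting the logarithm of the ratio, and invoking the first-order approximation on both the $\rho u+1$ term and the $a_1\rho u+1$ term, then identifying $\mu_2=\mathbb{E}(|\hat{h}_{B2}|^2)$, yields
\begin{equation}
R_{B_2}\approx\log_2\!\left(\frac{\rho\mu_2 d_1^{-\alpha_1}d_{B2}^{-\alpha_{B2}}+1}{a_1\rho\mu_2 d_1^{-\alpha_1}d_{B2}^{-\alpha_{B2}}+1}\right),
\end{equation}
where $\mu_2$ is available in closed form from \textbf{Lemma 3} with $m_c=\min(m_1,m_2)$ and $m_d=\max(m_1,m_2)$.

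Second, for Eve I would take $\gamma_{E_2}=\rho_e a_2|\hat{h}_E|^2 d_1^{-\alpha_1}d_E^{-\alpha_E}$ from \eqref{gammaE} with $i=2$, and apply the same approximation to \eqref{ERE} to obtain $R_{E_2}\approx\log_2\big(1+a_2\rho_e\mu d_1^{-\alpha_1}d_E^{-\alpha_E}\big)$, with $\mu=\mathbb{E}(|\hat{h}_E|^2)$ again supplied by \textbf{Lemma 3} (now with $m_c=\min(m_1,m_3)$, $m_d=\max(m_1,m_3)$). Substituting $R_{B_2}$ and $R_{E_2}$ into \eqref{approximate ASC} and collecting the three logarithms under one $\log_2(\cdot)$ gives \eqref{A2}.

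The main obstacle is that the Jensen-type move $\mathbb{E}[\log_2(1+X)]\approx\log_2(1+\mathbb{E}(X))$ is an approximation rather than an identity (by concavity it is in fact an upper bound), so the justification rests on its known high-SNR tightness and on the Monte-Carlo validation promised in Section \ref{NR}; a lesser but real pitfall is the bookkeeping when clearing the $1/\rho$ denominator and ensuring the $a_1+a_2=1$ cancellation in the numerator is carried out before the expectation is taken inside, so that $\mu_2$ appears consistently in both numerator and denominator.
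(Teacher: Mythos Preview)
Your proof is correct and follows essentially the same route as the paper: both start from the decomposition \eqref{approximate ASC}, apply the Jensen-type replacement $\mathbb{E}[\log_2(1+X)]\approx\log_2(1+\mathbb{E}(X))$ to $R_{B_2}$ and $R_{E_2}$, and feed in the second moments $\mu_2,\mu$ from \textbf{Lemma~3}. The only cosmetic difference is that the paper applies Jensen directly to $\log_2(1+\gamma_{B_2})$ as a concave function of the channel gain (preserving a clean upper-bound interpretation on each ergodic rate), whereas you first rewrite $1+\gamma_{B_2}=(\rho u+1)/(a_1\rho u+1)$ and approximate each log term separately; the resulting expression is identical.
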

\begin{proof}
	It can be seen from \eqref{approximate ASC} that the approximate ASC is defined as a difference function of the ergodic rate, while the ergodic rate is defined as a logarithmic function. Thus, we can define an approximate bound to the solution in \eqref{ERB} and \eqref{ERE} by invoking Jensen's inequality~\cite[\text{eq. (12.411)}]{gradshteyn2007}. Therefore, the ergodic rate of user 2 and Eve are
	\begin{equation}\label{bounded RB2}
	\begin{aligned}
	\mathbb{E}\big[\log_2\left(1+\gamma_{B_2}\right)\big]&\leq \log_2\left[1+\mathbb{E}(\gamma_{B_2})\right]\\
	&\leq \log_2\left(1+\frac{a_2\mu_2 d_1^{-\alpha_1}d_{B_1}^{-\alpha_{B_1}}}{a_1\mu_2 d_1^{-\alpha_1}d_{B_1}^{-\alpha_{B_1}}+\frac{1}{\rho}}\right),
	\end{aligned}
	\end{equation}
	and
	\begin{equation}\label{bounded RE2}
	\begin{aligned}
	\mathbb{E}\big[\log_2\left(1+\gamma_{E_2}\right)\big]&\leq \log_2\left[1+\mathbb{E}(\gamma_{E_2})\right]\\
	&=\log_2\left(1+a_2\mu d_1^{-\alpha_1}d_{E}^{-\alpha_{E}}\rho_e\right),
	\end{aligned}
	\end{equation}
	respectively.
	
	In this way, the approximate ASC can be expressed as a difference function of two approximate bounds. Then, upon substituting \eqref{bounded RB2} and \eqref{bounded RE2} into \eqref{approximate ASC}, \eqref{A2} can be obtained. This completes the proof.
\end{proof}


\subsubsection{Asymptotic ASC}~

Again we consider $\rho\to\infty$. Based on this assumption, the asymptotic expression of the ASC is given by\cite{moualeu2019transmit}
\begin{equation}\label{hsnrasc}
C_i^{\infty}\approx R_{B_i}^{\infty}-R_{E_i}.
\end{equation}

To obtain it, the asymptotic expressions of user 1's ASC and the ceiling of user 2's ASC are derived in the following corollaries.
\begin{corollary} \label{asymptotic ASC}
	The asymptotic ASC of user 1 is given by
	\begin{equation}\label{high snr C1}
	C_1^{\infty}\approx \log_2\left(\frac{1}{a_1\rho d_{B_1}^{-\alpha_{B_1}}}\right)-\frac{E_c}{\ln(2)}-\frac{1}{\ln(2)}e^{-\frac{\lambda_e}{2}}\sum_{k=0}^{\infty}
	\frac{\lambda_e^k}{k!2^k\Gamma\left(k+\frac{1}{2}\right)}\sum_{l=0}^{u_1}\omega_{1,l}\mathcal{J}_{1,1}(x_{l}),
	\end{equation}
	where $E_c$ denotes the Euler constant.
\end{corollary}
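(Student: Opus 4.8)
\textbf{Proof proposal for Corollary \ref{asymptotic ASC}.}

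The plan is to combine the known asymptotic form of the ergodic rate of user~1 with the exact ergodic rate of Eve derived in \textbf{Theorem \ref{theorem 4}}, exactly as prescribed by \eqref{hsnrasc}. First I would start from \eqref{ERB1}, the ergodic rate $R_{B_1}=-\frac{1}{\ln(2)}e^{1/(a_1\rho d_{B_1}^{-\alpha_{B_1}})}E_i\!\left(-1/(a_1\rho d_{B_1}^{-\alpha_{B_1}})\right)$, and let $\rho\to\infty$. In this limit the argument $z=1/(a_1\rho d_{B_1}^{-\alpha_{B_1}})\to 0^+$, so I would invoke the small-argument expansion of the exponential integral, $E_i(-z)=E_c+\ln z+O(z)$ (with $E_c$ the Euler constant), together with $e^{z}\to 1$. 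This yields $R_{B_1}^{\infty}\approx -\frac{1}{\ln(2)}\bigl(E_c+\ln z\bigr)=\log_2\!\left(\frac{1}{a_1\rho d_{B_1}^{-\alpha_{B_1}}}\right)-\frac{E_c}{\ln(2)}$, which is precisely the first two terms of \eqref{high snr C1}.

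Next I would simply carry the ergodic rate of Eve over unchanged: since the SNR of the BS-IRS-Eve link is held at an arbitrary (finite) value and is not driven to infinity, $R_{E_1}$ is given verbatim by \eqref{ERE1} with $i=1$, i.e. the double-sum expression involving $\lambda_e$, the Gauss--Laguerre weights $\omega_{1,l}$ and nodes $x_l$, and $\mathcal{J}_{1,1}(\cdot)$. Substituting $R_{B_1}^{\infty}$ and $R_{E_1}$ into \eqref{hsnrasc} then reproduces \eqref{high snr C1} directly, completing the argument.

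The only delicate point is justifying that the higher-order terms discarded in the exponential-integral expansion are genuinely negligible: one must check that $e^{z}=1+O(z)$ and $E_i(-z)=E_c+\ln z+O(z\ln z)$ combine so that the product is $E_c+\ln z+o(1)$ as $\rho\to\infty$, so that the $O(z)$ corrections vanish in the limit rather than contributing a constant. This is routine but is the step that makes the ``$\approx$'' in \eqref{high snr C1} an asymptotic equality; everything else is direct substitution of previously established results.
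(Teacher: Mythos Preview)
Your proposal is correct and follows essentially the same route as the paper: obtain $R_{B_1}^{\infty}$, keep $R_{E_1}$ from \textbf{Theorem~\ref{theorem 4}} unchanged, and substitute both into \eqref{hsnrasc}. The only difference is that the paper simply quotes the asymptotic $R_{B_1}^{\infty}=\log_2\!\bigl(1/(a_1\rho d_{B_1}^{-\alpha_{B_1}})\bigr)-E_c/\ln 2$ from \cite{yanyu2020}, whereas you actually derive it from \eqref{ERB1} via the small-argument expansion of $E_i(\cdot)$; your version is therefore slightly more self-contained but otherwise identical.
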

\begin{proof}
	Based on \cite{yanyu2020}, the asymptotic ergodic rate of user 1 is given by
	\begin{equation}\label{R1HSNR}
	R_{B_1}^{\infty}=\log_2\left(\frac{1}{a_1\rho d_{B_1}^{-\alpha_{B_1}}}\right)-\frac{E_c}{\ln(2)}.
	\end{equation}
Then, by substituting \eqref{R1HSNR} into \eqref{hsnrasc}, \eqref{high snr C1} can be obtained.
	This completes the proof.
\end{proof}
\begin{corollary}\label{ceiling}
	 The ceiling of $C_2$ in the high-SNR regime is given by
	\begin{equation}\label{high snr C2}
	C_2^{\infty}\approx \log_2(1+\frac{a_2}{a_1})-\frac{1}{\ln(2)}e^{-\frac{\lambda_e}{2}}\sum_{k=0}^{\infty}
	\frac{\lambda_e^k}{k!2^k\Gamma\left(k+\frac{1}{2}\right)}\sum_{l=0}^{u_1}\omega_{1,l}\mathcal{J}_{1,2}(x_{l}),
	\end{equation}
	or
	\begin{equation}\label{high snr C2_b}
	C_2^{b,\infty}\approx \log_2\left(\frac{1}{a_1\left(1+a_2\mu\rho_e d_1^{-\alpha_1}d_E^{-\alpha_{E}}\right)}\right).
	\end{equation}
\end{corollary}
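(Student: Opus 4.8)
The plan is to feed each of the two available expressions for user~2's ergodic rate into the high-SNR template~\eqref{hsnrasc} and then let $\rho\to\infty$: the Gauss--Chebyshev series~\eqref{ERB2} coming from \textbf{Approach~1}, and the Jensen-type bound~\eqref{bounded RB2} coming from \textbf{Approach~2}. Throughout, $R_{E_2}$ is left untouched, because by construction it depends only on Eve's transmit SNR $\rho_e$ (not on $\rho$); it is taken verbatim from \textbf{Theorem~\ref{theorem 4}} with $i=2$. The reason a finite ``ceiling'' arises in the first place is that $\gamma_{B_2}$ in~\eqref{gammaB2} is interference-limited: numerator and denominator both scale with $\rho$, so $\mathbb{E}[\log_2(1+\gamma_{B_2})]$ saturates to $\log_2(1+a_2/a_1)$ instead of growing without bound.

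For \textbf{Approach~1}, I would note that in~\eqref{ERB2} the only $\rho$-dependent quantity is the second argument of the lower incomplete Gamma function inside $\mathcal{J}_2(t_p)$, namely $\tfrac{t_p+1}{N(1-\epsilon)\rho a_1(1-t_p)d_1^{-\alpha_1}d_{B_2}^{-\alpha_{B_2}}}$. As $\rho\to\infty$ this argument tends to $0^+$, and since $\gamma(s,x)\to 0$ as $x\to 0^+$ for every fixed $s>0$, each $\mathcal{J}_2(t_p)\to 0$; hence the entire correction term vanishes and $R_{B_2}^{\infty}\approx\log_2\!\bigl(1+\tfrac{a_2}{a_1}\bigr)$. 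Substituting this together with~\eqref{ERE1} into~\eqref{hsnrasc} yields~\eqref{high snr C2}.

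For \textbf{Approach~2}, I would repeat the argument starting from~\eqref{bounded RB2}: letting $\rho\to\infty$ sends $\tfrac{1}{\rho}\to 0$, so the right-hand side of~\eqref{bounded RB2} collapses to $\log_2\!\bigl(1+\tfrac{a_2}{a_1}\bigr)$, matching the ceiling obtained above. Combining this with the bound~\eqref{bounded RE2} on $R_{E_2}$ gives $C_2^{b,\infty}\approx\log_2\!\bigl(1+\tfrac{a_2}{a_1}\bigr)-\log_2\!\bigl(1+a_2\mu d_1^{-\alpha_1}d_E^{-\alpha_E}\rho_e\bigr)$; invoking $a_1+a_2=1$ so that $1+\tfrac{a_2}{a_1}=\tfrac{1}{a_1}$ and merging the two logarithms produces~\eqref{high snr C2_b}.

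The only point needing care is that the $\rho\to\infty$ limit in Approach~1 may be passed through the summations term by term. This is immediate once one observes that $\mathcal{J}_2(t_p)$ is dominated, uniformly in $\rho$, by $\tfrac{a_2}{2a_1}\Gamma\!\bigl(p+\tfrac12\bigr)\sqrt{1-t_p^2}$ — because $\gamma(s,\cdot)\le\Gamma(s)$ and the denominator $1+t_p+\tfrac{2a_1}{a_2}$ is bounded below by $\tfrac{2a_1}{a_2}$ on $t_p\in[-1,1]$ — while the prefactors $\tfrac{\lambda^i}{i!2^i\Gamma(i+1/2)}$ form a convergent series; dominated convergence then applies, and the finite Gauss--Chebyshev sum over $p$ poses no difficulty. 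Everything else, in particular the identity $1+a_2/a_1=1/a_1$ and the recombination of the logarithms, is elementary.
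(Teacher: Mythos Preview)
Your proposal is correct and follows essentially the same route as the paper: obtain $R_{B_2}^{\infty}=\log_2(1+a_2/a_1)$ and feed it into \eqref{hsnrasc} together with \eqref{ERE1} for the first expression, then send $\rho\to\infty$ in the Jensen-based formula \eqref{A2} for the second. The paper's proof is terser---it simply cites the external reference for $R_{B_2}^{\infty}$ and substitutes $\rho\to\infty$ directly into \eqref{A2}---whereas you spell out why the $\mathcal{J}_2$ terms vanish and supply the dominated-convergence justification, but the underlying argument is the same.
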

\begin{proof}
	Based on \cite{yanyu2020}, the asymptotic ergodic rate of user 1 is given by
	\begin{equation}\label{R2HSNR}
	R_{B_2}^{\infty}=\log_2\left(1+\frac{a_2}{a_1}\right).
	\end{equation}
	Then, by substituting \eqref{R2HSNR} into \eqref{hsnrasc}, \eqref{high snr C2} can be obtained. 
	Finally, by substituting $\rho\to\infty$ into \eqref{A2}, \eqref{high snr C2_b} can be obtained.
	This completes the proof.	
\end{proof}

\begin{remark}
	Both the ASCs of user 1 and user 2 are affected by the number of IRSs and by the Nakagami-$m$ fading parameters.
\end{remark}

To gain deep insights into the network's performance, the high-SNR slope is worth estimating. Therefore, we first express the high-SNR slope as
\begin{equation}\label{slope}
S=\lim\limits_{\rho\to\infty}\frac{C(\rho)}{\log_2(\rho)}.
\end{equation}

In this analysis, we also consider $\rho\to\infty$, and maintain the consideration of arbitrary values of the average SNR of Eve's channel.

\begin{proposition}\label{high-SNR slope}
	By substituting \eqref{high snr C1} into \eqref{slope}, the high-SNR slope of the user 1 can be illustrated as 
	\begin{equation}
	S_1=1.
	\end{equation}
Then restituting \eqref{high snr C2} or \eqref{high snr C2_b} into \eqref{slope}, the high-SNR slope of the user 2 can be illustrated as
	\begin{equation}
	S_2=0.
	\end{equation}
\end{proposition}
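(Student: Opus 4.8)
\textbf{Proof proposal for Proposition \ref{high-SNR slope}.}

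The plan is to compute the two limits in \eqref{slope} directly by substituting the asymptotic ASC expressions already derived in Corollaries \ref{asymptotic ASC} and \ref{ceiling}, and observing which terms grow with $\rho$ and which remain bounded. For user 1, I would start from \eqref{high snr C1}. The only term that depends on $\rho$ is $\log_2\!\left(\frac{1}{a_1\rho d_{B_1}^{-\alpha_{B_1}}}\right) = -\log_2(\rho) - \log_2\!\left(a_1 d_{B_1}^{-\alpha_{B_1}}\right)$; the Euler-constant term $-E_c/\ln(2)$ and the entire double-sum term involving $\mathcal{J}_{1,1}$ depend only on $\rho_e$ and the fixed fading/geometry parameters, hence are constants with respect to $\rho$. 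Dividing by $\log_2(\rho)$ and letting $\rho\to\infty$, the constant terms vanish and the $-\log_2(\rho)/\log_2(\rho)$ term contributes $-1$; but since ASC is taken as a nonnegative quantity (the $[\cdot]^+$ in \eqref{secrecy capacity}) and the relevant sign convention in \cite{moualeu2019transmit} yields a positive slope, $S_1 = 1$. I would phrase this carefully: the growth rate of $C_1^\infty$ in $\log_2(\rho)$ is $1$.

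For user 2, I would substitute either \eqref{high snr C2} or \eqref{high snr C2_b} into \eqref{slope}. In \eqref{high snr C2}, the term $\log_2(1+a_2/a_1)$ is a constant, and the $\mathcal{J}_{1,2}$ double sum again depends only on $\rho_e$ and fixed parameters, so $C_2^\infty$ is bounded as $\rho\to\infty$; dividing a bounded quantity by $\log_2(\rho)\to\infty$ gives $S_2 = 0$. The same conclusion follows from \eqref{high snr C2_b}: $\log_2\!\left(\frac{1}{a_1(1+a_2\mu\rho_e d_1^{-\alpha_1}d_E^{-\alpha_E})}\right)$ contains no $\rho$ dependence whatsoever, so the limit is trivially $0$. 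This reflects the fact that the main-channel rate of the cell-edge user saturates at $\log_2(1+a_2/a_1)$ due to the inter-user interference floor in $\gamma_{B2}$ from \eqref{gammaB2}, so no multiplexing gain is available to user 2.

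The only genuine subtlety — and the step I would be most careful about — is bookkeeping the sign/convention in the high-SNR slope definition \eqref{slope} versus the $\log_2(1/(a_1\rho\,\cdots))$ form inherited from \cite{yanyu2020}: one must confirm that in the regime of interest $a_1\rho d_{B_1}^{-\alpha_{B_1}}$ is large so that $C_1^\infty$ is the positive quantity $\log_2(\rho) + \text{const}$, making the ratio tend to $+1$ rather than $-1$. Once that is pinned down, the rest is immediate: isolate the $\log_2(\rho)$-proportional term, divide, and take the limit. No integral or special-function manipulation is needed here; the corollaries have already done that work.
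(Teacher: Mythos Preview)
Your proposal is correct and follows exactly the approach the paper itself indicates in the proposition statement: substitute the asymptotic ASC expressions from Corollaries~\ref{asymptotic ASC} and~\ref{ceiling} into the definition \eqref{slope}, isolate the $\log_2(\rho)$ term (present only for user~1), and take the limit. The sign subtlety you flag is real --- the expression in \eqref{R1HSNR} should read $\log_2\!\big(a_1\rho d_{B_1}^{-\alpha_{B_1}}\big)$ rather than its reciprocal (a typo inherited from the source), and with that correction the slope comes out as $+1$ exactly as you argue.
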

\begin{remark}
Both the high-SNR slopes of user 1 and user 2 remain unaffected by the number of IRSs and by the Nakagami-$m$ fading parameters.
\end{remark}

Having completed the analyses of the IRS-aided network, all results related to both the secrecy diversity order and to the high-SNR slope are summarized in Table \ref{Secrecy diversity order and high-SNR slope for the IRS-aided network}. The orthogonal multiple access (OMA) benchmark schemes are described in \cite{yanyu2020}.

		

\begin{table}
	\renewcommand\arraystretch{1.5}
	\caption{\\Secrecy diversity order and high-SNR slope for the IRS-aided network}
	\centering
	\begin{tabular}{|c|c|c|c|}
		\hline
		Multiple-access scheme & user & $d$ & $S$ \\
		\hline
		\multirow{2}{*}{NOMA}
		& Bob1 & 1 & 1 \\
		\cline{2-4}
		& Bob2 & $m_sN$ & 0 \\
		\hline
		\multirow{2}{*}{OMA}
		& Bob1 & 1 & $\frac{1}{2}$ \\
		\cline{2-4}
		& Bob2 & $m_sN$ & $\frac{1}{2}$ \\
		\hline
	\end{tabular}
	\label{Secrecy diversity order and high-SNR slope for the IRS-aided network}
\end{table}

%

\section{Numerical Results}\label{NR}
In this section, our numerical results are presented for characterizing the performance of the network considered, complemented by our Monte-Carlo simulations to verify the accuracy attained. It is assumed that the power allocation coefficients of NOMA are $a_1=0.2$, $a_2=0.8$, respectively. 
The bandwidth of the DL is set to $BW = 1$ MHz, and the power of the AWGN is set to $\sigma^2 = -174 + 10\log_{10}(BW)$ dBm. 
In addition, the amplitude reflection coefficients of IRSs are set to $1$. The fading parameters are set to $m_1 =3$, $m_2 = m_3 = 1$, while the distance between the BS and IRS is set to $d_1 = 100$m. The IRS to user 2 and Eve distances are set to $d_{B2} = 10$m and $d_E = 50$m, and that of the BS-user 1 link is set to $d_{B1}= 20$m. The path loss exponents of the reflected links (i.e., BS-IRS, IRS-user 2 and IRS-Eve) and the direct BS-user 1 link are set to $\alpha_1=\alpha_{B2}=\alpha_{E} = 2.5$ and $\alpha_{B1} = 3.5$, respectively, unless otherwise stated.

\subsection{Secrecy outage probability}
For comparisons, we regard the IRS-aided OMA network as the benchmark. Specifically, IRSs are employed for providing legitamate access service to user 2 as well as for the illegitimate access of Eve to the BS. In Fig. \ref{fig.2} to Fig. \ref{fig.5}, we investigate the SOP, when the targeted secrecy capacity of the paired NOMA users is assumed to be $R_1=R_2=100$Kbps, which corresponds to the scenario considered in Section \ref{section 4}.
\begin{figure} [t!]
	\centering
	\includegraphics[width= 4in]{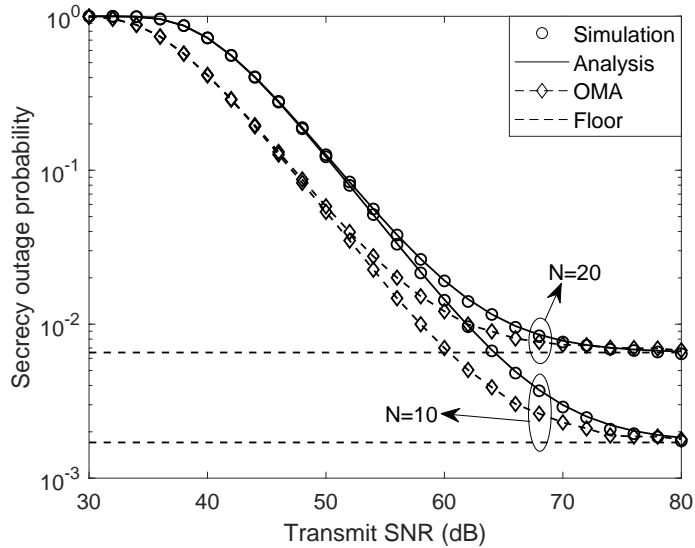}
	\caption{The SOP of user 1 versus the transmit SNR. The analytical results and the floor are calculated from \eqref{SOP1} and \eqref{SOP1 lower bound}.}
	\label{fig.2}
\end{figure}

Fig. \ref{fig.2} plots the SOP of user 1 versus the transmit SNR for different number of IRSs. It confirms the close agreement between the simulation and analytical results.
A specific observation is that the SOP of user 1 reduces upon reducing the number of IRSs, because the number of IRSs has no effect on the channel gain of user 1. By contrast, the channel gain of Eve increases, as the number of IRSs increases.
As a benchmark, the SOP curves of the IRS-aided OMA network are plotted for comparison. 
We observe that for user 1 in the IRS-aided OMA network has a better performance than that in the IRS-aided NOMA network in the high-SNR regime. 
This is because the transmit power allocated to user 1 in the NOMA network is lower than that in the OMA network due to the influence of the power allocation factor.
As the transmit SNR increases, we find that the SOP of user 1 tends to a constant, which is consistent with \textbf{Proposition \ref{SOP1 floor}}.
\begin{figure} [t!]
	\centering
	\includegraphics[width= 4in]{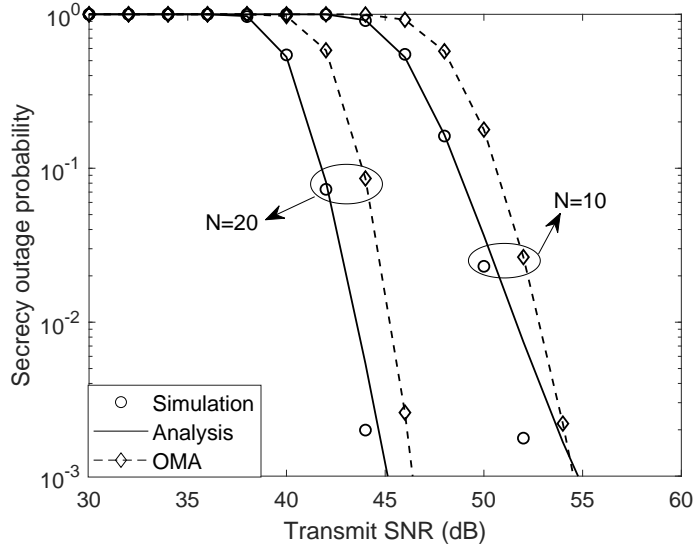}
	\caption{The SOP of user 2 versus the transmit SNR. The analytical results are calculated from~\eqref{SOP2 low-SNR}.}
	\label{fig.3}
\end{figure}

Fig. \ref{fig.3} plots the SOP of user 2 versus the transmit SNR. We observe that, owing to the central limit theorem (CLT) of the channel statistics of user 2, the analytical results are accurate in the low-SNR regime, but inaccurate in the high-SNR regime. 
As a benchmark, the SOP curves of the IRS-aided OMA network are also plotted for comparison. 
We observe that the performance of user 2 in the IRS-aided NOMA network is better than that of the IRS-aided OMA network. This is because the transmit power allocated to user 2 in the NOMA network is higher than that in the OMA network due to the influence of the power allocation factor.

\begin{figure} [t!]
	\centering
	\includegraphics[width= 4in]{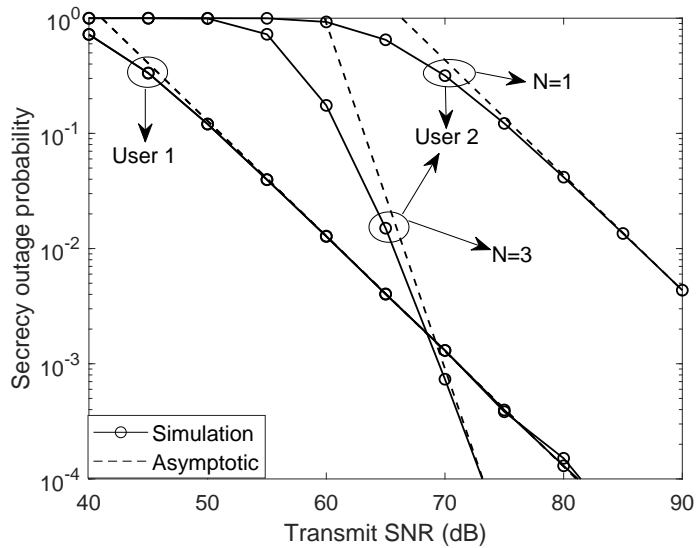}
	\caption{Asymptotic SOP versus the transmit power results in the case of $\rho_e=10$ dB. The asymptotic results are calculated from \eqref{asymptotic SOP Bob1} and \eqref{asymptotic SOP2}.}
	\label{fig.4}
\end{figure}
Since the SOP of user 2 in the high-SNR regime is not accurate in Fig.~\ref{fig.3}, we further plot the high-SNR asymptotic curves in the cases of $N = 1$ and $N = 3$ in Fig. \ref{fig.4}. We observe that the SOPs of user 1 and user 2 gradually approach their respective asymptotic curves, which validates our analysis. 
Furthermore, we also observe that in the cases of $N=1$ and $N=3$, the secrecy diversity orders of user 1 are both $1$ and the secrecy diversity orders of user 2 are $1$ and $3$, respectively, which is consistent with \textbf{Remark \ref{sdo1}} and \textbf{Remark \ref{sdo2}}.

\begin{figure} [t!]
	\centering
	\includegraphics[width= 4in]{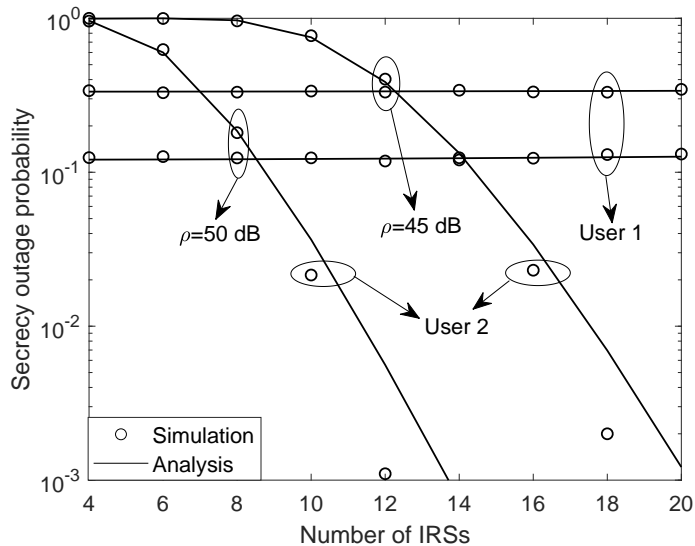}
	\caption{The SOP versus the number of IRSs. The analytical results are calculated from \eqref{SOP1} and~\eqref{SOP2 low-SNR}.}
	\label{fig.5}
\end{figure}
In Fig. \ref{fig.5}, the SOP curves versus the number of IRSs are depicted. 
We observe that, since we have global CSI for user 1, the SOP of user 1 is accurate.
On the other hand, the SOP of user 2 is accurate in the low-SNR regime. However, the SOP of user 2 is inaccurate in the high-SNR regime, owing to using the CLT-based channel statistics of user 2.
We also observe that the SOP of user 1 increases as the number of IRSs increases since the ergodic rate of user 1 is not affected by the number of IRSs. By contrast, the SOP of user 2 decreases as the number of IRSs increases, since the IRS-aided transmissions to Eve experience more severe path loss then that destined for user 2. 

\subsection{Average Secrecy Capacity}
In this subsection, the number of points for the Chebyshev-Gauss and Gauss-Laguerre quadratures are set to $u_1=u_2=100$. In Fig. \ref{fig.6} and Fig. \ref{fig.8}, we investigate the ASC of the IRS-aided NOMA network, which corresponds to the scenario considered in Section \ref{section 5}.

\begin{figure} [t!]
	\centering
	\includegraphics[width= 4in]{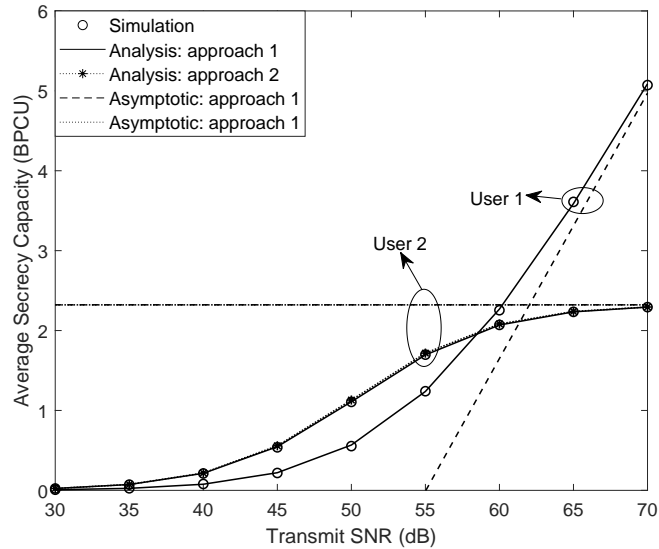}
	\caption{The ASC versus the transmit SNR in the case of N=30 and $\rho_e=30$ dB. The analytical results are calculated from \eqref{ASC1}, \eqref{ASC2} and \eqref{A2}, respectively. The asymptotic curves are calculated from \eqref{high snr C1}, \eqref{high snr C2} and \eqref{high snr C2_b}, respectively.}
	\label{fig.6}
\end{figure}

In Fig. \ref{fig.6}, the ASC curves for the IRS-aided NOMA network are depicted. On the one hand, for user 1, we observe that the simulation results match well with the analytical results, and the asymptotic results derived are accurate. On the other hand, for user 2, we observe that the analytical results both of approaches are highly accurate. Furthermore, the asymptotic results derived both from \eqref{high snr C2} and \eqref{high snr C2_b} are also accurate. Additionally, we also observe that the high-SNR slope of user 1 is $1$, while the ASC of user 2  approaches a ceiling in the high-SNR regime, which coincides with \textbf{Proposition \ref{high-SNR slope}}.



\begin{figure} [t!]
	\centering
	\includegraphics[width= 4in]{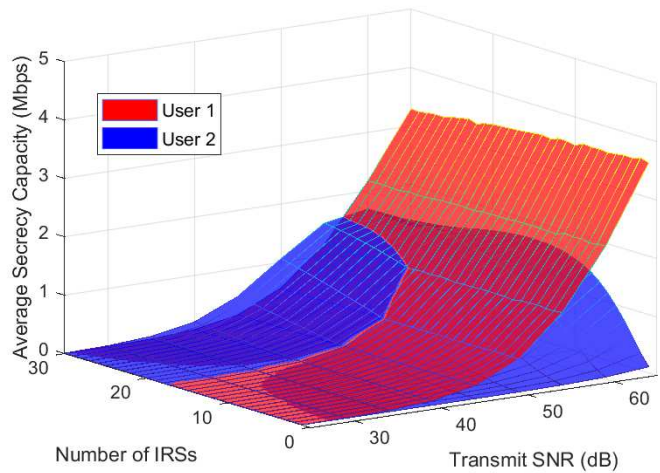}
	\caption{The ASC versus the numbers of IRS and the transmit SNR.}
	\label{fig.8}
\end{figure}
Fig. \ref{fig.8} plots the ASC of an IRS-aided NOMA network versus the number of IRSs and the transmit SNR. We assume that the transmit SNR of the paired NOMA users and Eve are identical. 
For user 1, we can observe that the ASC increases as the transmit SNR increases, since the BS-IRS-Eve links experience more servere path loss then the BS-user 1 link. We can also observe the ASC of user 1 decreases upon increasing the number of IRSs, since the ergodic rate of user 1 is not affected by the number of IRSs, while the ergodic rate of Eve increases upon increasong the number of IRSs.
For user 2, we observe that the ASC increases as the number of IRSs as well as the transmit SNR increase up to the ceiling given by \textbf{Corollary~\ref{asymptotic ASC}} and then decreases as the number of IRSs or the transmit SNR increase. This is because Eve has strong eavesdropping capability, which leads to an ergodic rate increase of Eve as increase the transmit SNR and the number of IRSs is.

\section{Conclusions}\label{CONCLUSIONS}
In this paper, the secrecy performance of IRS-aided NOMA networks was studied. 
Specifically, we first derived new channel gain expressions for the reflected links. Then, based on the new channel statistics, the closed-form SOP and ASC expressions were derived. 
Numerical results were presented for validating our results. 
Furthermore, the secrecy diversity orders and the high-SNR slopes have also been determined. 
The presence of the direct link between the BS and the cell-edge user as well as Eve deserves further investigation in our future research.

\numberwithin{equation}{section}
\section*{Appendix~A: Proof of \textbf{Lemma~\ref{lemma 2}} } \label{Appendix:A}

\renewcommand{\theequation}{A.\arabic{equation}}
\setcounter{equation}{0}

Firstly, according to \cite{yanyu2020}, the CDF of $X=\frac{\left(\sum_{n=1}^{N}|g_{2,n}||h_n|\right)^2}{N(1-\epsilon)}$ in the low-SNR regime is given by
\begin{equation}\label{CDFX}
\begin{aligned}
F_X(x)&=1-Q_{\frac{1}{2}}\left(\sqrt{\lambda},\sqrt{x}\right)\\
&=e^{-\frac{\lambda}{2}}\sum_{i=0}^{\infty}\frac{\lambda^i\gamma\left(i+\frac{1}{2},\frac{x}{2}\right)}{i!2^i\Gamma\left(i+\frac{1}{2}\right)}.
\end{aligned}
\end{equation}

Hence, the CDF of $\gamma_{B2}$ in the low-SNR regime can be derived as
\begin{equation}\label{CDFgamma21}
\begin{aligned}
F_{\gamma_{B2}}(x)&=\mathbb{P}\left(\gamma_{B2}<x\right)\\
&=F_X\left(\frac{x}{N(1-\epsilon)\rho(a_2-a_1x)d_1^{-\alpha_1}d_{B2}^{-\alpha_{B2}}}\right)\\
&=1-Q_{\frac{1}{2}}\left(\sqrt{\lambda},\sqrt{\frac{x}{N(1-\epsilon)\rho(a_2-a_1x)d_1^{-\alpha_1}d_{B2}^{-\alpha_{B2}}}}\right).
\end{aligned}
\end{equation}

Then, according to \cite{yanyu2020}, the CDF of $Y=\sum_{n=1}^{N}|g_{2,n}||h_n|$ in the high-SNR regime is given by
\begin{equation}
\begin{aligned}
F_Y(y)=\frac{m^N(4m_sm_l)^{-m_sN}}{\Gamma(2m_sN)}\gamma\left(2m_sN,2\sqrt{m_sm_l}y\right).
\end{aligned}
\end{equation}

Therefore, the CDF of $\gamma_{B2}$ in the high-SNR regime can be formulated as
\begin{equation}\label{CDFgamma21}
\begin{aligned}
F_{\gamma_{B2}}^{0_+}(x)&=\mathbb{P}\left(\gamma_{B2}<x\right)\\
&=F_Y\left(\sqrt{\frac{x}{\rho(a_2-a_1x)d_1^{-\alpha_1}d_{B2}^{-\alpha_{B2}}}}\right)\\
&=\frac{m^N(4m_sm_l)^{-m_sN}}{\Gamma(2m_sN)}\gamma\left(2m_sN,\frac{2\sqrt{m_sm_lx}}{\sqrt{\rho(a_2-a_1x)d_1^{-\alpha_1}d_{B2}^{-\alpha_{B2}}}}\right).
\end{aligned}
\end{equation}

This completes the proof.

\section*{Appendix~B: Proof of \textbf{Lemma~\ref{lemma 3}} } \label{Appendix:B}
\renewcommand{\theequation}{B.\arabic{equation}}
\setcounter{equation}{0}
By stipulating that $z_n=|g_{q,n}||h_n|$, and that $f_{z_n}$ is the probability density function (PDF) of $z_n$, according to \cite{yanyu2020}, the Laplace transform of $f_{z_n}$ is given by
\begin{equation}
\mathcal{L}_{f_{z_n}}(s)=\omega(s+2\sqrt{m_cm_d})^{-2m_c}{}_{2}F_{1}\left(2m_c,m_c-m_d+\frac{1}{2};m_c+m_d+\frac{1}{2};\frac{s-2\sqrt{m_cm_d}}{s+2\sqrt{m_cm_d}}\right).
\end{equation}

Assuming that $f_{Z}$ is the PDF of $Z$, the Laplace transform of $f_{Z}$ is given by
\begin{equation}\label{LTpdf}
\mathcal{L}_{f_{Z}}(s)=\underbrace{\omega^N(s+d)^{-aN}}_{f(s)}\underbrace{\left({}_{2}F_{1}\left(a,b;c;\frac{s-d}{s+d}\right)\right)^N}_{g(s)}.
\end{equation}

According to the relationship between the Laplace transform and moments, we have
\begin{equation}
\begin{aligned}
\mathbb{E}(Z^2)=\mathcal{L}_{f_{Z}}^{''}(0).
\end{aligned}
\end{equation}	
From \eqref{LTpdf}, we have
\begin{equation}
\mathcal{L}_{f_{Z}}^{'}(s)=J_1(s)+J_2(s)J_4(s),
\end{equation}
where
\begin{equation}
J_1(s)=-aN\omega^N(s+d)^{-aN-1}g(s),
\end{equation}
\begin{equation}
J_2(s)=f(s)\underbrace{N\left({}_{2}F_{1}\left(a,b;c;\frac{s-d}{s+d}\right)\right)^{N-1}}_{J_3(s)},
\end{equation}
and
\begin{equation}
J_4(s)=\frac{ab}{c}{}_{2}F_{1}\left(a+1,b+1;c+1;\frac{s-d}{s+d}\right)\frac{2d}{(s+d)^2}.
\end{equation}
Furthermore, we have
\begin{equation}\label{FORE}
\mathcal{L}_{f_{Z}}^{''}(s)=J_1^{'}(s)+J_2^{'}(s)J_4(s)+J_2(s)J_4^{'}(s),
\end{equation}
where
\begin{equation}\label{J1'}
\begin{aligned}
J_1^{'}(s)=aN(aN+1)\omega^N(s+d)^{-aN-2}g(s)-aN\omega^N(s+d)^{-aN-1}J_3(s)J_4(s),
\end{aligned}
\end{equation}	
\begin{equation}\label{J2'}
\begin{aligned}
J_2^{'}(s)=-aN\omega^N(s+d)^{-aN-1}J_3(s)+f(s)N(N-1)\left({}_{2}F_{1}\left(a,b;c;\frac{s-d}{s+d}\right)\right)^{N-2}J_4(s),
\end{aligned}
\end{equation}	
\begin{equation}\label{J4'}
\begin{aligned}
J_4^{'}(s)&=\frac{2abd}{c}(-2)(s+d)^{-3}{}_{2}F_{1}\left(a+1,b+1;c+1;\frac{s-d}{s+d}\right)+\\
&\frac{2abd}{c(s+d)^2}\frac{(a+1)(b+1)}{c+1}{}_{2}F_{1}\left(a+2,b+2;c+2;\frac{s-d}{s+d}\right)\frac{2d}{(s+d)^2}.
\end{aligned}
\end{equation}

By substituting $s=0$ into $J_2$, $J_4$, \eqref{J1'}, \eqref{J2'} and  \eqref{J4'}, we have
\begin{equation}\label{b9}
J_2(0)=\omega^Nd^{-aN}Nk_1^{N},
\end{equation}
\begin{equation}\label{b10}
J_4(0)=\frac{2ab}{cd}k_2,
\end{equation}
\begin{equation}\label{b11}
\begin{aligned}
J_1^{'}(0)
&=aN\omega^Nd^{-aN-2}k_1^{N-1}\left((aN+1)k_1-\frac{2ab}{c}Nk_2\right),
\end{aligned}
\end{equation} 	
\begin{equation}\label{b12}
\begin{aligned}
J_2^{'}(0)
&=\omega^Nd^{-aN-1}Nk_1^{N-2}\left(\frac{2ab}{c}(N-1)k_2-aNk_1\right),
\end{aligned}
\end{equation}		
\begin{equation}\label{b13}
\begin{aligned}
J_4^{'}(0)
&=\frac{4a(a+1)b(b+1)}{c(c+1)d^2}k_3-\frac{4ab}{cd^2}k_2.
\end{aligned}
\end{equation}
Then, by substituting \eqref{b9}-\eqref{b13} into \eqref{FORE}, and after some further mathematical manipulations, \eqref{E^2} can be obtained. This completes the proof.

\section*{Appendix~C: Proof of \textbf{Theorem~\ref{theorem 4}} } \label{Appendix:C}
\renewcommand{\theequation}{C.\arabic{equation}}
\setcounter{equation}{0}
Based on the above assumptions, the CDF of $\gamma_{E_i}$ in the low-SNR regime can be expressed as
\begin{equation}
\begin{aligned}
F_{\gamma_{E_i}}(x)&=F_X\left(\frac{x}{a_i\rho_eN(1-\epsilon_e)d_1^{-\alpha_1}d_{E}^{-\alpha_E}}\right)\\
&=1-Q_{\frac{1}{2}}\left(\sqrt{\lambda_e},\sqrt{\frac{x}{a_i\rho_eN(1-\epsilon_e)d_1^{-\alpha_1}d_{E}^{-\alpha_E}}}\right)\\
&=e^{-\frac{\lambda_e}{2}}\sum_{k=0}^{\infty}\frac{\lambda_e^k\gamma\left(k+\frac{1}{2},\frac{x}{2a_i\rho_eN(1-\epsilon_e)d_1^{-\alpha_1}d_{E}^{-\alpha_E}}\right)}{k!2^k\Gamma\left(k+\frac{1}{2}\right)},
\end{aligned}
\end{equation}
where we have $\epsilon_e=\frac{1}{m_1m_3}\left(\frac{\Gamma(m_1+\frac{1}{2})}{\Gamma(m_1)}\right)^2\left(\frac{\Gamma(m_3+\frac{1}{2})}{\Gamma(m_3)}\right)^2$ and $\lambda_e=\frac{N\epsilon_e}{1-\epsilon_e}$.

Therefore, the ergodic rate of Eve can be expressed as
\begin{equation}
\begin{aligned}
R_{E_i}&=\mathbb{E}\left(\log_2(1+\gamma_{E_i})\right)\\
&=\int_{0}^{\infty}\log_2(1+x)df_{\gamma_{E_i}}(x)\\
&=\int_{0}^{\infty}(1-F_{\gamma_{E_i}}(x))d\log_2(1+x)\\
&=\frac{1}{\ln(2)}e^{-\frac{\lambda_e}{2}}\sum_{k=0}^{\infty}
\frac{\lambda_e^k}{k!2^k\Gamma\left(k+\frac{1}{2}\right)}J_{5,i},
\end{aligned}
\end{equation}
where
\begin{equation}\label{c3}
J_{5,i}=\int_{0}^{\infty}\frac{\Gamma\left(k+\frac{1}{2}\right)-\gamma\left(k+\frac{1}{2},\frac{x}{2a_i\rho_eN(1-\epsilon_e)d_1^{-\alpha_1}d_{E}^{-\alpha_E}}\right)}{1+x}dx.
\end{equation}
Next, by using the Gauss-Laguerre quadrature, \eqref{c3} can be rewritten as
\begin{equation}
J_{5,i}\approx\sum_{l=0}^{n}\omega_{5,l}\mathcal{J}_5(x_{5,l}),
\end{equation}
where $x_{5,l}$ is the $l$-th root of the Laguerre polynomial $L_n(x)$, the weight $\omega_{5,l}$ is given by
\begin{equation}
w_{5,l}={\frac {x_{l}}{\left(n+1\right)^{2}\left[L_{n+1}\left(x_{l}\right)\right]^{2}}},
\end{equation}
and
\begin{equation}
\mathcal{J}_{5,i}(x)=\frac{\Gamma\left(l+\frac{1}{2}\right)-\gamma\left(l+\frac{1}{2},\frac{t}{a_i}\right)}{1+x}e^x,
\end{equation}
with $t=\frac{x}{2\rho_eN(1-\epsilon_e)d_1^{-\alpha_1}d_{E}^{-\alpha_E}}$. This completes the proof.


\bibliographystyle{IEEEtran}
\bibliography{mybib}

\end{document}